\documentclass[11pt]{article}

\usepackage[margin=1in]{geometry}
\usepackage[T1]{fontenc}
\usepackage{lmodern}
\usepackage{amsmath,amssymb,amsthm,mathtools}
\usepackage{microtype}
\usepackage[hidelinks]{hyperref}
\hypersetup{pdftitle={Balanced Posted Prices in Stationary Two-Sided Markets},
  pdfauthor={Ilan Reuven Cohen},
  pdfkeywords={Posted prices, two-sided markets, stationary prophet inequalities, online algorithms, XOS valuations}}

\newtheorem{theorem}{Theorem}[section]
\newtheorem{lemma}[theorem]{Lemma}
\newtheorem{corollary}[theorem]{Corollary}
\newtheorem{proposition}[theorem]{Proposition}
\theoremstyle{definition}

\theoremstyle{remark}
\newtheorem{remark}[theorem]{Remark}

\newcommand{\OPT}{\mathrm{OPT}}
\newcommand{\LP}{\mathrm{LP}}
\newcommand{\one}{\mathbf 1}
\newcommand{\R}{\mathbb R}

\title{Balanced Posted Prices in Stationary Two-Sided Markets}
\author{Ilan Reuven Cohen\\
Bar-Ilan University, Israel\\
\texttt{ilan-reuven.cohen@biu.ac.il}}
\date{}

\begin{document}
\pagenumbering{gobble}
\hypersetup{pageanchor=false}
\maketitle
\markright{}
\thispagestyle{empty}

\begin{abstract}
We study an intermediary that buys nonperishing items from arriving sellers and sells them
to arriving buyers, subject to limited storage. Sellers and buyers arrive
according to independent Poisson processes. The intermediary pays and charges
the same fixed item prices, aiming to maximize long-run gains from trade:
buyer value minus seller cost. Our benchmark may choose allocations directly,
has unlimited storage, and must purchase each unit before allocating it.
Its optimal long-run gain rate equals the value of a linear program of average
purchase and allocation rates, with or without knowledge of future arrivals.

For buyers whose bundle values are maxima of nonnegative additive valuations
(XOS), we compute static common prices that guarantee half of this benchmark
for arbitrary finite item capacities, even one storage slot per item,
regardless of how ties are resolved without using future arrivals. The prices
are computable in polynomial time using demand and supporting-value queries.
The construction equalizes itemwise lower bounds on buyer and seller
contributions to gains from trade, accounting for procurement costs. We call
this contribution balance. As a single-item foundation, we also equalize the
rates of purchases and sales selected by the price, before checking inventory
availability. We call this rate balance. For capacity \(C\), it yields the
tight guarantee \(C/(C+1)\).
The two principles parallel proofs of the classical prophet inequality based
on half the expected benchmark and on a median threshold, adapting these
balancing ideas to recurring two-sided markets with limited inventory.
\end{abstract}

\clearpage
\pagenumbering{arabic}
\hypersetup{pageanchor=true}

\section{Introduction}

An online intermediary repeatedly acquires items from arriving sellers and
allocates them to arriving buyers. Its objective is to maximize long-run average gains
from trade, measured as the value generated for buyers minus the costs of
acquiring inventory, subject to finite storage capacity. Every arrival requires an
immediate decision without knowledge of future opportunities: buying too
conservatively can leave valuable demand unserved, while buying too
aggressively can fill capacity with costly inventory; selling to the current
buyer realizes value now but may forgo a better future trade. The central
question is whether the same fixed item prices for buying and selling can
coordinate these decisions while buyers retain their choice of bundles.
For buyers whose bundle values are maxima of nonnegative additive valuations
(\emph{XOS} valuations), we give efficiently computable prices achieving half
of the unlimited-storage benchmark, even with one storage slot per item.

\paragraph*{A motivating application.}
Consider a community reuse service that acquires used electronics from their
owners and makes them available to new users. Its aim is to maximize the value
created for users after accounting for the owners' costs of parting with their
items. Owners arrive with items and minimum acceptable payments, while users may view
products as substitutes or want bundles of products. The service has space for
only a few units of each product type. A high item price attracts supply but
deters moderate-value buyers; a low price attracts demand but can leave
valuable requests unserved when stock runs out. We study a common two-sided price
\(p_j\), paid to a seller from whom item \(j\) is acquired and charged to a
buyer who receives it. Transfers do not enter social gain.

Using the same purchase and sale price gives zero cash margin on completed
units. The service may need \emph{working capital}, money to fund inventory
before resale; Section~\ref{sec:comparison} bounds this requirement.
Our welfare model omits operating costs and intermediary profit.

Our benchmark has unlimited storage and may choose purchases and allocations
directly. It knows the entire realized arrival sequence, but may allocate a
unit only after acquiring it from an earlier seller arrival. Both the
benchmark and the online policy start empty.

We compare our prices with a simpler optimization problem, called a flow
relaxation. A seller type specifies the item offered and its cost, and a
buyer type specifies its values for bundles of items. The relaxation chooses
average purchase and allocation rates, respecting the supply and demand of
each type while ignoring arrival order and inventory limits. In our
nonperishing model, its optimal value equals the best long-run gain of both
online and clairvoyant policies with unlimited storage and direct control of
allocations (Section~\ref{sec:multi-model}). The pricing question is how much
of this value fixed common prices can realize with limited storage while
buyers choose their bundles.

The classical prophet problem supplies two proof templates for our pricing
rules. In that problem, independent nonnegative values are revealed
sequentially, an online decision maker may select at most one, and a prophet
selects the maximum. The classical prophet inequality guarantees the online
decision maker at least one half of the prophet's expected
value~\cite{KS77,KS78}. Samuel-Cahn~\cite{SC84} chooses a median of the prophet's maximum,
balancing the probability that this maximum lies above or below the threshold.
For this single-choice problem, Kleinberg and Weinberg~\cite{KW12} instead use
a threshold equal to half the expected prophet value, balancing the threshold
value against the additional value above it. Both thresholds are one-sided
posted prices that accept the first value clearing the price. We adapt
these two proof templates to a recurring two-sided inventory market.

\paragraph*{Two balancing principles.}
The two pricing principles are easiest to see with one item.
Fix a price \(p\). A completed trade from a seller of cost \(a<p\) to a
buyer of value \(v>p\) creates social gain
\[
v-a=(v-p)+(p-a).
\]
The first term is buyer utility and can be realized only when the item is in
inventory. The second is seller surplus and can be realized only when the
inventory has a free slot. Payments at the common posted price cancel except
for the bounded value of terminal inventory over a finite horizon.

For one item, the first rule chooses the price and randomization for indifferent agents
so that the rates of potential purchases and sales are equal. These
\emph{potential purchase and sale intensities} count arrivals that pass the
price and random acceptance rules, before checking for full or empty
inventory. We call the rule
\emph{rate balance}. When their common rate is positive, inventory
increases and decreases at the same rate whenever those moves are possible.
Each inventory level is therefore equally likely in the long run, so buyers
find stock with the same probability that sellers find free capacity.

The second rule uses the flow relaxation to lower-bound buyer utility per
unit time when an item is in stock and seller surplus per unit time when
there is room for it. We choose
the price to make these two bounds equal, each equal to half of the item's
contribution to the relaxation. We call this \emph{contribution balance}.
An item always has stock or free capacity, possibly both, so at least one
bound applies in every inventory state. Nonnegative utility and surplus
allow negative bounds to be replaced by zero. This yields a one-half
guarantee without calculating inventory probabilities. The realized
utility and surplus can differ; the price equalizes their lower bounds.

We study these principles in two models with stationary arrivals: the rates
and agent types remain the same over time.

\paragraph*{One item with finite inventory.}
Our first model has one homogeneous item and inventory capacity \(C\). Sellers
with different costs and buyers with different values arrive in independent
Poisson streams. The intermediary posts one two-sided price, paying it to
sellers whose costs fall below the price when inventory is not full and
charging it to buyers whose values exceed the price when inventory is nonempty.
This model isolates how one price interacts with the inventory process and with
the available storage capacity.

\paragraph*{Multiple items with combinatorial buyers.}
Our second model has \(m\) heterogeneous items, with an arbitrary finite
capacity for each item. Sellers offer individual items at different costs,
while buyers have XOS valuations over bundles. The intermediary posts an
item-price vector, paying \(p_j\) to an accepted seller of item \(j\) and
charging the same \(p_j\) to a buyer. Each buyer chooses a
bundle from the items currently available that maximizes its utility: the
bundle's value minus the sum of its item prices. The buyer consumes at most
one unit of each selected item. When several bundles give the same maximum
utility, \emph{tie-breaking} chooses among them. Tie-breaking also determines
whether an indifferent seller trades. A \emph{nonanticipatory}
rule may use current and past information but cannot use future arrivals.
Buyer substitution couples the item
inventories, so their states can be correlated even though the arrival streams
are independent.

In both models, the posted prices are static: they remain the same over time.

\paragraph*{Poisson arrivals and discrete time.}
Independent Poisson processes are a standard idealization of agents arriving
at unrelated times with constant rates~\cite[Sections~2.1--2.3]{Gallager11Poisson}.
Here these rates are unaffected by trading decisions. Waiting times are
\emph{memoryless}: elapsed waiting time does not change the distribution
of the remaining wait. Poisson arrivals also satisfy
\emph{Poisson Arrivals See Time Averages} (PASTA)~\cite{Wolff82}:
under a nonanticipatory policy, arrivals see inventory according to
its time-average frequencies (Lemma~\ref{lem:poisson-arrivals}).
Listing arrivals chronologically gives independent arrival types, with
probabilities proportional to their rates. Policies that ignore calendar
time and waiting times can therefore be viewed in discrete steps, one
arrival per step. Multiplying gain per step by the total arrival rate
gives gain per unit time. Independent yes-or-no trials for each type
in a common slot can instead create simultaneous arrivals, requiring a
rule for their order or joint processing.

\subsection{Our contributions}

Our main result gives efficient static common prices for recurring XOS markets
with costly procurement and finite storage, even one slot per item.
Raising a common price changes seller participation and buyers' bundle choices,
hence the inventory available to future buyers.
A policy is \(\alpha\)-competitive if its expected long-run gain rate is at least
an \(\alpha\) fraction of the offline benchmark's.

Let \(C_{\min}\) be the smallest item capacity. With direct control of
allocations, sampling an optimal flow guarantees \(C_{\min}/(C_{\min}+1)\) of the
benchmark; our common prices guarantee one half while buyers choose their
preferred available bundles (Section~\ref{sec:lp-sampling}).

For computation, we use exact rational data and two polynomial-time procedures,
called oracles. A
\emph{demand oracle} returns a utility-maximizing bundle at given item prices.
An \emph{XOS oracle} returns a \emph{supporting XOS clause} for a specified
bundle: nonnegative item values that sum to the buyer's value for that bundle
and whose sum on any other bundle does not exceed the buyer's value for it.

\begin{theorem}[Informal multi-item result]
For XOS buyers and arbitrary finite item capacities, a static
contribution-balanced price vector is \(1/2\)-competitive against the
unlimited-inventory offline benchmark. It is computable in polynomial time
using demand and XOS oracles, and the guarantee holds under arbitrary
nonanticipatory tie-breaking.
\end{theorem}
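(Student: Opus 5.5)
We plan to prove the theorem by a per-item decomposition built on an optimal solution of the flow relaxation. That relaxation has purchase rates \(y_s\le\mu_s\) for each seller type \(s\), allocation rates \(x_{b,S}\) with \(\sum_S x_{b,S}\le\lambda_b\) for each buyer type \(b\), and, for every item \(j\), a flow constraint \(\sum_{(b,S):\,j\in S}x_{b,S}\le\sum_{s\text{ sells }j}y_s\). The earlier result of Section~\ref{sec:multi-model} identifies its value \(\LP\) with the benchmark.

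\emph{Step 1 (per-item accounting).} Fix an optimal solution. Without loss of generality the flow constraints are tight: surplus purchases have cost at least \(0\), so dropping them does not lower the objective. For each pair \((b,S)\) with \(x_{b,S}>0\), call the XOS oracle to obtain a supporting clause \(w^{b,S}\). Then set
\[
F_j=\sum_{s\text{ sells }j}y_s,\qquad B_j=\sum_{S\ni j}x_{b,S}w^{b,S}_j,\qquad A_j=\sum_{s\text{ sells }j}y_s a_s .
\]
With these definitions \(\LP=\sum_j G_j\), where \(G_j=B_j-A_j\).

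Next define the per-item buyer and seller bounds
\[
U_j(p)=\sum_{S\ni j}x_{b,S}(w^{b,S}_j-p)^+,\qquad V_j(p)=\sum_{s\text{ sells }j}y_s(p-a_s)^+ .
\]
The function \(U_j\) is continuous and nonincreasing in \(p\), and \(V_j\) is continuous and nondecreasing. Choose \(p_j\) where they cross. At the crossing,
\[
U_j(p_j)=V_j(p_j)\ge \max\bigl(B_j-p_jF_j,\;p_jF_j-A_j\bigr)\ge G_j/2,
\]
because the two quantities inside the maximum sum to \(G_j\). Both functions are piecewise linear with polynomially many breakpoints, so the crossing is computable exactly. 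This is contribution balance.

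\emph{Step 2 (lower-bounding realized utility and surplus).} Consider an arriving buyer of type \(b\). Sample a bundle \(S\) with probability \(x_{b,S}/\lambda_b\), and let \(T\) be the set of items of \(S\) that are in stock and satisfy \(w^{b,S}_j>p_j\). By the XOS clause, the buyer's utility-maximizing choice gives utility at least \(\sum_{j\in T}(w^{b,S}_j-p_j)\). This holds for any tie-breaking, since only optimality of the chosen bundle is used. By PASTA (Lemma~\ref{lem:poisson-arrivals}), the long-run buyer-utility rate is then at least \(\sum_j \pi_j^{\mathrm{stock}}U_j(p_j)\), where \(\pi_j^{\mathrm{stock}}\) is the time-average probability that item \(j\) is in stock. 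Symmetrically, sellers of item \(j\) with \(a_s<p_j\) always accept when inventory is not full. Indifferent sellers contribute zero surplus regardless of tie-breaking. So the seller-surplus rate is at least \(\sum_j\pi_j^{\mathrm{room}}V_j(p_j)\).

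Only marginal per-item probabilities enter these bounds. Hence the correlation between inventories induced by substitution is harmless. Since \(C_j\ge1\), every item is in stock or has room in every state, so \(\pi_j^{\mathrm{stock}}+\pi_j^{\mathrm{room}}\ge1\). This gives a total of at least \(\sum_j\min(U_j,V_j)\ge\LP/2\).

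\emph{Step 3 (from utility plus surplus to gains from trade).} Gains from trade equal buyer utility plus seller surplus plus the net payments. With common prices, net payments over \([0,T]\) equal the value of terminal inventory at prices \(p\). This is at most \(\sum_j C_jp_j\), which is bounded, so it vanishes after dividing by \(T\). We will work with \(\liminf\) averages to avoid assuming that limits exist.

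\emph{Computation and main obstacle.} The relaxation has exponentially many bundle variables. We plan to solve it through its dual, where separation over buyer constraints is exactly a demand query at dual item prices. The ellipsoid method then yields an optimal primal solution with polynomial support. I expect the main delicacy to be Step 2's averaging. The realized process is a policy-dependent, possibly nonergodic Markov chain under arbitrary nonanticipatory tie-breaking. To handle this, I would first try to use PASTA in its sample-path form on finite horizons, bounding expected utility given the state at each arrival. Only afterwards would I take \(\liminf\), so that no stationary distribution is needed.
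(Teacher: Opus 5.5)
Your proof is correct and shares the paper's skeleton:
\begin{itemize}
\item supporting XOS clauses taken from an optimal flow;
\item itemwise buyer-utility and seller-surplus potentials;
\item the statewise fact that every item has stock or free capacity, so only marginal occupation fractions enter;
\item the finite-horizon Poisson sampling identity in place of stationarity;
\item the terminal-inventory correction \(\sum_j p_jC_j\);
\item dual separation by demand queries, followed by trimming of surplus purchases.
\end{itemize}
One small point on the last item: to test a returned demand bundle for violation you also need its value, which the paper obtains from the XOS oracle.

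The genuine difference is the price. The paper posts the closed-form midpoint \(p_j=(W_j+A_j)/(2x_j)\), where \(W_j\) is your \(B_j\) and \(x_j\) is your \(F_j\). This equalizes the linearized certificates \(W_j-x_jp_j\) and \(x_jp_j-A_j\), each of which is then exactly \((W_j-A_j)/2\). The paper's seller potential also uses the full arrival rates rather than the LP purchase rates.

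You instead equalize the positive-part potentials \(U_j\) and \(V_j\) themselves. Since one is nonincreasing and the other nondecreasing, your crossing maximizes \(\min(U_j,V_j)\) over all prices. It therefore certifies at least as much per item as the midpoint does for these potentials. The cost is a breakpoint search, which is polynomial only because the recovered flow has polynomial support, instead of a single arithmetic step.

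Your inequalities \(U_j(p)\ge B_j-pF_j\) and \(V_j(p)\ge pF_j-A_j\) also show directly that both potentials are at least \(G_j/2\) at the midpoint. So your Steps~2 and~3 cover the paper's specific price vector as well.
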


We solve the flow relaxation and use supporting XOS clauses to split its
bundle values among items.
For item \(j\), let \(x_j\) be its total flow rate, \(W_j\) its total
assigned buyer value per unit time, and \(A_j\) its total seller cost per
unit time. We post the cost-adjusted midpoint
\[
p_j=\frac{W_j+A_j}{2x_j}
\]
when \(x_j>0\). The supporting-clause buyer comparison of Feldman, Gravin,
and Lucier~\cite{FGL15} gives the itemwise lower bound \(W_j-x_jp_j\).
To account for procurement, we add the seller-surplus bound \(x_jp_j-A_j\).
The midpoint makes both bounds equal to \((W_j-A_j)/2\).
Nonnegative utility and surplus let us replace a negative bound by zero.
Every item has stock or free capacity, so the combined bounds guarantee half
its flow contribution without calculating inventory probabilities.

The single-item analysis supplies a foundation and a sharp capacity comparison
for these pricing rules.

\begin{theorem}[Informal single-item result]
For every capacity \(C\ge1\), there is a rate-balanced posted price, together
with prescribed randomization when sellers or buyers are indifferent, that is
\(C/(C+1)\)-competitive against the unlimited-inventory offline benchmark.
This factor is best possible for every \(C\).
\end{theorem}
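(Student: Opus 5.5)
The plan has three steps: reduce the benchmark to a flow linear program, analyze the price-controlled inventory as a birth--death chain, and then prove tightness with a separate instance.

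\emph{Step 1: the benchmark.} Write sellers as a Poisson stream of rate \(\lambda\) with cost distribution \(F\), and buyers as a stream of rate \(\mu\) with value distribution \(G\). By the flow-relaxation result (Section~\ref{sec:multi-model}), the benchmark equals
\[
\LP=\max\int v\,dy-\int a\,dz
\]
subject to purchase and sale rate densities \(z\le\lambda\,dF\), \(y\le\mu\,dG\), and \(\int dy=\int dz=x\). An optimal solution buys the cheapest mass \(x\) of sellers and sells to the highest-value mass \(x\) of buyers. The optimal \(x\) is the quantity where the seller quantile cost meets the buyer quantile value, so a market-clearing price \(p^*\) exists.

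\emph{Step 2: the price and the inventory chain.} I post \(p=p^*\), with randomization at atoms chosen so that the potential purchase intensity \(\alpha\) (sellers with cost below \(p\), plus a fraction of indifferent ones) equals the potential sale intensity \(\beta\) (buyers with value above \(p\), plus a fraction of indifferent ones). The common rate is then exactly \(x\), since the accepted sellers and buyers are precisely the LP's flow. Inventory is a birth--death chain on \(\{0,\dots,C\}\) with equal up and down rates \(x\). Its stationary law is therefore uniform, with each level having probability \(1/(C+1)\).

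By PASTA (Lemma~\ref{lem:poisson-arrivals}), the rate of sales is \(x\Pr[I>0]=xC/(C+1)\), and purchases occur at the same rate. Accepted sellers are a random sample of the LP's chosen sellers, independent of the inventory state, and accepted buyers are likewise a random sample of the LP's chosen buyers. Hence the long-run gain is
\[
\tfrac{C}{C+1}\Bigl(\int v\,dy-\int a\,dz\Bigr)=\tfrac{C}{C+1}\LP,
\]
where the payments cancel up to a bounded terminal term, which vanishes per unit time. The point needing care is that the trades happening do not select on type. This holds because each arrival's acceptance depends only on its own type and the current state, and independent marking of Poisson streams gives the result.

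\emph{Step 3: tightness (expected main obstacle).} For fixed \(C\), I would take \(\lambda=\mu=1\), a deterministic cost \(0\), and a value of \(1\). Then \(\LP=1\). Any online policy has gain rate at most the sale rate, and the sale rate is bounded via the inventory chain. The chain has up-rate at most \(1\) and down-rate at most \(1\), and I would argue that sales at rate \(r\) force \(\Pr[I>0]\ge r\) while purchases at rate \(r\) force \(\Pr[I<C]\ge r\). Thus it does not suffice to note that a symmetric random walk gives \(C/(C+1)\): I need an upper bound over all nonanticipatory policies, including those that throttle purchases adaptively.

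My first attempt is a flow-balance inequality across each cut \(\{I\le k\}\): \(q_k\pi_k=s_{k+1}\pi_{k+1}\), with \(q_k,s_{k+1}\le1\). Any policy with sale rate \(r\) must satisfy \(r=\sum_{k\ge1}s_k\pi_k\) and \(r=\sum_{k<C}q_k\pi_k\). Summing over the cuts and using \(\sum\pi=1\) gives \(r\le C/(C+1)\). The extremal case is \(q=s=1\), where \(\pi\) is uniform and \(r=C/(C+1)\).

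If zero-cost sellers make the loss in sale rate trivial to bound, I would perturb the costs slightly so that the instance is nondegenerate. I expect making this bound rigorous for general adaptive, history-dependent policies, via ergodic averages rather than stationary probabilities, to be the delicate part.
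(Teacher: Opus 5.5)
Your Steps 1 and 2 are essentially the paper's argument. Both use a rate-balanced price whose accepted flow is an optimal flow, a balanced birth--death chain with uniform stationary law, and PASTA to give every accepted type trade probability $C/(C+1)$. The paper certifies optimality via $(b_j-a_i)^+\le(b_j-\tau)^++(\tau-a_i)^+$ and a product matching instead of quantile crossing, and treats zero common rate separately; these differences are cosmetic.

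\textbf{The gap is in Step 3}, the upper bound for an arbitrary online capacity-$C$ policy. Summing the cut identities $q_k\pi_k=s_{k+1}\pi_{k+1}$ only reproduces $\sum_{k<C}q_k\pi_k=\sum_{k\ge1}s_k\pi_k$, i.e.\ $r=r$. The bounds $q_k,s_k\le1$ then give only $r\le1-\pi_0$ and $r\le1-\pi_C$. Nothing in your sketch forces an endpoint probability to be at least $1/(C+1)$, so $r\le C/(C+1)$ does not follow as written. The missing step is to use both bounds on each cut flow, $f_k:=q_k\pi_k=s_{k+1}\pi_{k+1}\le\min\{\pi_k,\pi_{k+1}\}$. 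For a fixed level $m$, bound the cuts below $m$ by their lower level and the cuts at or above $m$ by their upper level; this gives $r=\sum_kf_k\le1-\pi_m$. Averaging over $m$, or taking $m$ to be a mode, yields $r\le C/(C+1)$.

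You also leave history-dependent policies open, but no ergodic theory is needed. Apply Lemma~\ref{lem:poisson-compensation} with $H(t)=\one\{N(t^-)=k\}$ to the seller and buyer streams. The expected numbers of $k\to k+1$ and $k+1\to k$ moves by time $T$ are then at most $T\bar\pi_k(T)$ and $T\bar\pi_{k+1}(T)$, the expected occupation fractions. From empty inventory, down-crossings of each cut never exceed up-crossings. The same averaging then gives $\mathbb E[\text{sales by }T]\le TC/(C+1)$ for every nonanticipatory policy.

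The paper avoids this computation. With free sellers and identical buyers, accepting every feasible arrival maximizes sales, so only the greedy chain must be analyzed. A pathwise coupling makes this rigorous: against any policy $a$, the greedy policy $g$ maintains $Q_g\ge Q_a$ for sales and $P_g\ge P_a$ for purchases. The paper also takes seller rate $1+\varepsilon$, so the offline gain of $1$ follows directly from the upward drift. With your $\lambda=\mu=1$, you instead need Corollary~\ref{cor:unlimited-lp} or a $\sqrt T$ bound on the reflected walk to certify that the offline gain is $1$. Perturbing the costs is unnecessary.
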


At a positive common intensity, all inventory levels are equally likely in
the long run; PASTA gives stock and free-capacity probabilities \(C/(C+1)\).
A one-seller, one-buyer family proves tightness.

Rate balance also extends to the multi-item model when the prices and
randomization for indifferent agents are chosen together. We prove that
buyers can find stock and sellers can find free capacity with equal
probabilities of at least one half for every item used by a suitably chosen
optimal flow solution (Theorem~\ref{thm:multi-rate-balance}).
In steady state, this also equalizes each such item's purchase rate conditional
on free capacity and its sale rate conditional on stock being available.
The proof uses
a fixed-point argument to choose prices consistent with the inventory
probabilities they generate. It handles the coupled effect of each price on buyers' bundle choices
and establishes existence rather than an efficient construction.
The equal probabilities of empty and full inventory are a possible starting
point for guarantees that improve with capacity; for capacities above one,
they need not make all inventory levels equally likely.

\subsection{Related work}
\label{sec:related}

\paragraph*{Stationary prophets.}
Kessel, Shameli, Saberi, and Wajc~\cite{KSSW22} formulate the stationary
prophet inequality, in which copies arrive automatically, may perish, and
are allocated to recurring buyers. Their Lemma~4.6 identifies the same
\(C/(C+1)\) capacity barrier: in their model, no capacity-\(C\) online policy
can guarantee a larger ratio relative to the optimal unbounded-inventory online
policy. Thus the numerical capacity phenomenon is not unique to our setting.
Our single-item theorem shows that it reappears in a two-sided welfare model
where inventory is built by costly purchases and does not perish, and that
a rate-balanced common posted price attains the factor against the
unlimited-inventory offline benchmark.

Patel and Wajc~\cite{PW24} extend the stationary-prophet line to
combinatorially constrained buyers and submodular objectives, and explicitly
identify XOS valuations and posted-price mechanisms as further directions. Our
contribution-balanced theorem treats a two-sided model with nonperishing stock
and XOS buyers: every stored unit is purchased from a recurring seller at the
posted price, the seller's cost enters social welfare, and buyers choose
bundles at the same item prices. These purchases and sales create the inventory
process.

\paragraph*{Trading models.}
Correa et al.~\cite{CCDHOS23} introduced the finite-horizon trading-prophet
model. Most closely, Aggarwal et al.~\cite{AGWZ26} separate buying and selling
opportunities and show that zero initial inventory precludes any finite ratio
in their asymmetric finite-horizon model, even for identical distributions.
Azar, Buchbinder, Levin, and Vardi~\cite{ABLV26Bundle} study a finite
adversarial bundle-trading model with online suppliers and customers, limited
inventories, demand oracles, a linear-programming benchmark, and inventory-dependent prices.
They obtain logarithmic profit guarantees under large-inventory and
value-augmentation assumptions. In contrast, our recurrent Poisson model has
single-item sellers, XOS buyers, and a long-run social-gain objective. Our
guarantees use static common prices and hold from empty initial inventory.
With finite storage, the terminal-inventory term vanishes in the long-run
average. The guarantees apply to arbitrary finite capacities.

\paragraph*{XOS posted prices.}
Feldman, Gravin, and Lucier~\cite{FGL15} derive static item prices for one-sided
combinatorial auctions using supporting additive clauses, and Braun and
Kesselheim~\cite{BK23SOSA} give a simplified treatment of this argument. We
inherit their supporting-clause decomposition and buyer-utility comparison.
Our setting adds recurring procurement: the same price controls seller
participation and buyers' demand, generating the inventory on which future
allocations depend. Our addition to the buyer comparison is a seller-surplus
certificate accounting for acquisition costs. Balancing the two bounds gives
the cost-adjusted midpoint.

D\"utting et al.~\cite{DKLRS24} establish \emph{median item prices}, which
balance the probabilities that each item is sold or left unsold in a
one-sided combinatorial auction. Their existence argument uses a fixed
point and suitable randomized choices for indifferent buyers. Their
polynomial-time construction is restricted to unit-demand buyers, who choose
at most one item; extending it to XOS buyers is left open in their paper. Our
rate-balance theorem controls stock and free-capacity probabilities in a
recurring inventory process. Here the prices affect both procurement from
sellers and buyers' bundle choices, and the prices and tie-breaking rule
are chosen together.

Braun and Kesselheim~\cite{BK23} also study strategic two-sided markets; our
agents are price-taking: they choose trades at the posted prices.
Strategic reporting and intermediary profit are outside the present model.

Appendix~\ref{app:extended-related} provides a broader discussion of classical
and stationary prophet inequalities, trading prophets, and two-sided
intermediation.

\subsection{Organization}

The remainder of the paper follows the two balancing principles. We first
analyze rate-balanced prices for one item, then both pricing principles for
XOS markets. We next compare the principles and then discuss limitations.
The flow relaxations and allocation baseline appear with the main results.
Technical sampling, fixed-point, and exact oracle-model details, together
with extended related work, are collected in the appendices.

\section{Rate-balanced prices for one item}
\label{sec:single-model}

\subsection{Model and unlimited-inventory offline benchmark}
\label{sec:single-arrivals}

There is one good. Seller type \(i\in I\) has cost \(a_i\ge0\) and arrives
according to a Poisson process of rate \(\lambda_i\). Buyer type \(j\in J\)
has value \(b_j\ge0\) and arrives according to an independent Poisson process
of rate \(\gamma_j\). The type sets are finite.

The intermediary has inventory capacity \(C\ge1\). Its inventory
\(N(t)\in\{0,1,\ldots,C\}\) increases by one after an accepted seller and
decreases by one after an accepted buyer. A purchase is possible only when
\(N(t)<C\), and a sale only when \(N(t)>0\). For a policy \(\mathcal A\), define
the expected long-run gain rate
\[
G(\mathcal A)
=
\liminf_{T\to\infty}\frac1T
\mathbb E\left[
\sum_{\text{sales before }T}b_j-
\sum_{\text{purchases before }T}a_i
\right].
\]
Here the objective subtracts seller costs, not cash transfers. A posted price
paid to a seller and later collected from a buyer cancels from long-run social
gain when purchase and sale flows balance.
The principal comparator, \(\OPT_{\mathrm{off}}^\infty\), is the
unlimited-inventory offline benchmark. It is clairvoyant and has unlimited
storage, but it must acquire each unit before allocating it. Both it and the
online policy start empty. As usual for a long-run average objective, a finite
initial transient is immaterial.

When the total arrival rate
\(\nu:=\sum_i\lambda_i+\sum_j\gamma_j\) is positive,
successive arrival types are independent, with seller type \(i\) occurring
with probability \(\lambda_i/\nu\) and buyer type \(j\) with probability
\(\gamma_j/\nu\). The waiting times between arrivals are memoryless, with
mean \(1/\nu\), and independent of these types and of each
other. For the policies using arrival types and inventory described in the
introduction, long-run gain per discrete step is converted to gain per unit
time by multiplying by \(\nu\).

The single-item relaxation matches seller and buyer types subject to their
supply and demand rates. Let \(x_{ij}\) be the relaxed trade rate between
seller type \(i\) and buyer type \(j\), measured in units per unit time.
These rates satisfy
\[
\sum_jx_{ij}\le\lambda_i,\qquad
\sum_ix_{ij}\le\gamma_j,\qquad
x_{ij}\ge0.
\]
The first constraint bounds each seller type's total trade rate by
\(\lambda_i\); the second bounds each buyer type's by \(\gamma_j\).
For a real number \(r\), its \emph{positive part} is
\(r^+:=\max\{r,0\}\). Consequently,
\begin{equation}
\OPT_{\mathrm{off}}^\infty\le \LP_1
:=
\max_x\left\{
\sum_{i,j}(b_j-a_i)^+x_{ij}:
\sum_jx_{ij}\le\lambda_i,\ 
\sum_ix_{ij}\le\gamma_j,\ 
x\ge0
\right\}.
\label{eq:single-lp}
\end{equation}
Here \(\LP_1\) is the maximum gain per unit of time in this flow relaxation.
Appendix~\ref{app:flow-upper-bounds} proves that it upper-bounds the
unlimited-inventory offline benchmark.

\paragraph*{Queueing theory background.}
The states of our inventory process are its possible inventory levels, from
empty to full. We say that the inventory is in \emph{steady state} when
the probability of each inventory level stays constant over time.
Different levels may have different probabilities. The probability distribution
over inventory levels in steady state is called a \emph{stationary
distribution}. For a process with finitely many states, such a distribution
exists and is unique if the rates of moving between states depend only on
the current state and every state can be reached from every
other~\cite[Section~3.5]{Norris97}. Our single-item policy has finitely many
inventory levels and independent Poisson arrivals. It uses the same price
and acceptance rules at every arrival, with independent random choices.
Its rates of moving between
inventory levels therefore depend only on the current inventory. When the
purchase and sale rates are positive, every inventory level can be reached
from every other, so these conditions hold. The zero-rate case is handled
separately in the single-item proof.

The Poisson Arrivals See Time Averages property (PASTA)~\cite{Wolff82}
lets us translate these inventory probabilities into arrival probabilities.

\begin{lemma}[PASTA]
\label{lem:poisson-arrivals}
In steady state, the probability that Poisson arrivals see a process in a
given state immediately before their arrival equals the probability of that
state at an arbitrary time, provided the process does not anticipate future
arrivals.
\end{lemma}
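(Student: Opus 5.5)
The plan is to derive the statement from the martingale/compensator characterization of Poisson arrival counts, which is the approach of Wolff's original argument. Let \(X(t)\) be the state process, here the inventory level \(N(t)\), with right-continuous paths having left limits. Let \(A(t)\) be the counting process of arrivals of a fixed type, Poisson with rate \(\lambda>0\). \emph{Nonanticipation} will be formalized as the \emph{lack-of-anticipation} assumption: for every \(t\), the future increments \(\{A(t+s)-A(t)\}_{s\ge0}\) are independent of the history \(\{X(u):u\le t\}\). This holds for our policies because their decisions depend only on past and current arrivals and on independent randomization. For a state \(k\), I compare the fraction of arrivals that see state \(k\) immediately before arriving, \(\frac{1}{A(T)}\int_0^T \one\{X(t^-)=k\}\,dA(t)\), with the time fraction \(\frac1T\int_0^T\one\{X(t)=k\}\,dt\).

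\textbf{Step 1.} Define the left-continuous indicator \(H(t)=\one\{X(t^-)=k\}\). It is predictable with respect to the filtration generated by \(X\) and \(A\). Under lack of anticipation, \(M(t)=A(t)-\lambda t\) is a martingale in that joint filtration, because future increments are independent of the joint past. The stochastic integral \(\int_0^t H\,dM\) is therefore a martingale, which gives
\[\mathbb E\int_0^T H(t)\,dA(t)=\lambda\,\mathbb E\int_0^T H(t)\,dt .\]
The process has only finitely many jumps on bounded intervals, so \(X(t^-)=X(t)\) for almost every \(t\), and the right-hand integral equals \(\lambda\,\mathbb E\int_0^T\one\{X(t)=k\}\,dt\).

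\textbf{Step 2.} In steady state, \(\Pr[X(t)=k]=\pi_k\) for every \(t\), so the right-hand side is \(\lambda T\pi_k\). The left-hand side is the expected number of arrivals in \([0,T]\) that see state \(k\). Dividing by \(\mathbb E A(T)=\lambda T\) shows that the expected proportion of arrivals seeing \(k\) equals \(\pi_k\). Equivalently, because the intensity is constant, the Palm probability that a typical arrival sees \(k\) equals the time-stationary probability \(\pi_k\).

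\textbf{Step 3 (sample-path version).} The later single-item proof uses long-run averages, so I will also give the almost-sure form. By the martingale strong law, \(\frac1T\int_0^T H\,dM\to0\) almost surely, since its quadratic variation grows at most linearly. Also \(A(T)/T\to\lambda\). Finally, the ergodic theorem for the irreducible finite-state Markov chain gives \(\frac1T\int_0^T\one\{X=k\}\,dt\to\pi_k\). Together these show that the empirical arrival-seen frequency converges to \(\pi_k\) from any initial state, including the empty one.

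The main obstacle is Step 1, namely stating precisely what "does not anticipate future arrivals" means and checking that it makes \(A(t)-\lambda t\) a martingale in the joint filtration. A policy could peek at the current arrival itself when deciding how it reacts. This is exactly why I evaluate the state at \(t^-\), immediately before the arrival. I would first argue directly from the independent increments of the Poisson processes: the state at \(t^-\) is measurable with respect to arrivals and randomization strictly before \(t\), which are independent of the arrivals after \(t\).
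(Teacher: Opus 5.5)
Your proposal is correct and follows essentially the paper's route. The paper's formal version (Lemma~\ref{lem:poisson-compensation} in Appendix~\ref{app:finite-horizon-multi}) is exactly your Step~1 identity \(\mathbb E\bigl[\sum_{\ell:t_\ell\le T}H(t_\ell)\bigr]=r\,\mathbb E\bigl[\int_0^T H(t)\,dt\bigr]\) for bounded predictable \(H\), proved by checking integrands \(Y\one\{s<t\le u\}\) and approximating. It assumes that future increments are independent of the joint past of all arrivals and policy randomization, which is what your martingale step actually uses; independence from the state history alone, as in your first formalization, would not by itself give the martingale property in the joint filtration.

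The paper then works only with expectations, via \(\mathbb E[H(t)]\to h\) and convergence of the chain from empty inventory, so your almost-sure Step~3 is an optional strengthening. Your Step~2 should be read as a ratio of expectations (the Palm probability), not the expected empirical proportion of arrivals seeing \(k\).
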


Thus arriving buyers see the same stock-availability probability as at an
arbitrary time, and arriving sellers see the same probability of free
capacity. Inventory may depend on past arrivals; future arrivals must remain
independent of that history. A formal version and proof appear in
Appendix~\ref{app:finite-horizon-multi}, which also proves the multi-item
guarantee without assuming steady state.

\subsection{A rate-balanced posted price}
\label{sec:balanced}

Fix a posted price \(\tau\). The intermediary pays \(\tau\) to a trading seller
and charges \(\tau\) to a trading buyer. A seller trades when \(a_i<\tau\), a
buyer trades when \(b_j>\tau\), and types on the other side reject. At equality
the policy prescribes trade with probability \(p_\tau\in[0,1]\) for a seller
and \(q_\tau\in[0,1]\) for a buyer. Write the resulting acceptance
probabilities as
\[
\alpha_i=
\begin{cases}
1&a_i<\tau,\\
p_\tau&a_i=\tau,\\
0&a_i>\tau,
\end{cases}
\qquad
\beta_j=
\begin{cases}
1&b_j>\tau,\\
q_\tau&b_j=\tau,\\
0&b_j<\tau.
\end{cases}
\]
Define
\[
u=\sum_i\alpha_i\lambda_i,
\qquad
v=\sum_j\beta_j\gamma_j.
\]
We call \(u\) and \(v\) the potential purchase and sale intensities. Whenever
the inventory is not full, upward transitions occur at rate \(u\); whenever it
is nonempty, downward transitions occur at rate \(v\). These are not realized
throughput rates: over a horizon of length \(T\), the normalized numbers of
purchases and sales differ by at most \(C/T\) because inventory is bounded. We
call the resulting posted-price policy rate-balanced when \(u=v\).

\begin{lemma}[Rate-balanced posted price]
\label{lem:balanced-exists}
There are \(\tau,p_\tau,q_\tau\) for which \(u=v=:R\).
\end{lemma}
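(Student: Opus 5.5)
This is an intermediate-value argument for monotone step functions, with the tie probabilities filling the jumps. Write \(U(\tau)=\sum_{i:a_i<\tau}\lambda_i\) and \(\overline U(\tau)=\sum_{i:a_i\le\tau}\lambda_i\). Then, as \(p_\tau\) ranges over \([0,1]\), the purchase intensity \(u\) sweeps the whole interval \([U(\tau),\overline U(\tau)]\). Similarly, write \(V(\tau)=\sum_{j:b_j>\tau}\gamma_j\) and \(\overline V(\tau)=\sum_{j:b_j\ge\tau}\gamma_j\); as \(q_\tau\) ranges over \([0,1]\), the sale intensity \(v\) sweeps \([V(\tau),\overline V(\tau)]\). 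So it suffices to find one \(\tau\) at which these two intervals intersect, and then pick \(p_\tau,q_\tau\) to hit a common point \(R\).

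The key steps, in order, are as follows.

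First, record the monotonicity. The map \(\tau\mapsto U(\tau)\) is nondecreasing and left-continuous, and \(\overline U\) is its right-continuous version. Both are constant between consecutive elements of the finite set \(S=\{a_i\}\cup\{b_j\}\). The map \(\tau\mapsto V(\tau)\) is nonincreasing, with the analogous continuity properties.

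Second, compare the ends.
\begin{itemize}
\item For \(\tau<\min S\) (or \(\tau<0\)): \(\overline U(\tau)=0\le V(\tau)\).
\item For \(\tau>\max S\): \(U(\tau)=\sum_i\lambda_i\ge 0=\overline V(\tau)\).
\end{itemize}
Define \(\tau^*=\sup\{\tau:\overline U(\tau)\le \overline V(\tau)\}\), or equivalently take the threshold where the order flips. Since all functions are piecewise constant with jumps only in \(S\), this supremum is attained or approached at a point \(\tau^*\in S\cup\{\text{below }\min S\}\).

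Third, verify the intersection at \(\tau^*\). We need
\[
U(\tau^*)\le \overline V(\tau^*)\quad\text{and}\quad V(\tau^*)\le\overline U(\tau^*).
\]
Both follow from one-sided limits.
\begin{itemize}
\item For \(\tau\) slightly below \(\tau^*\) we have \(\overline U(\tau)\le\overline V(\tau)\), with \(\overline U(\tau)=U(\tau^*)\) and \(\overline V(\tau)\ge\overline V(\tau^*)\). This needs care: for \(\tau\uparrow\tau^*\), \(\overline V(\tau)=\overline V(\tau^*)\) only if no \(b_j\) lies in that gap. This holds because \(\overline V\) is left-continuous, since \(\{b\ge\tau\}\) decreases to \(\{b\ge\tau^*\}\). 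This gives the first inequality.
\item For \(\tau\) slightly above \(\tau^*\) we have \(\overline U(\tau)>\overline V(\tau)\), with \(\overline U(\tau)\to\overline U(\tau^*)\) by right-continuity and \(\overline V(\tau)=V(\tau^*)\). This gives the second inequality.
\end{itemize}
Two intervals \([U,\overline U]\) and \([V,\overline V]\) with \(U\le\overline V\) and \(V\le\overline U\) intersect. So take \(R=\max\{U(\tau^*),V(\tau^*)\}\) and solve the linear equations for \(p_\tau,q_\tau\in[0,1]\). If \(\overline U(\tau^*)=U(\tau^*)\), any \(p_\tau\) works; likewise for \(q_\tau\).

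The main obstacle is this boundary bookkeeping: making sure the left and right limits are the right ones at the crossing point, and handling the degenerate edge cases. The edge cases are when the condition holds for all \(\tau\), or when \(R=0\), for instance if no seller is below any buyer. A cleaner alternative I would try first is to parametrize by a single monotone path. Sort the combined list of price levels and view \((\tau,p,q)\) as moving \(u\) continuously upward from \(0\) to \(\sum\lambda_i\) while \(v\) moves continuously downward from \(\sum\gamma_j\) to \(0\). Then apply the intermediate value theorem to \(u-v\) along the path, raising \(p\) from 0 to 1 and lowering \(q\) from 1 to 0 when crossing each level. This makes continuity immediate.
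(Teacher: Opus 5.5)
Your argument is correct and is essentially the paper's. Both fill the jumps of the monotone step functions with the tie probabilities and use one-sided limits to find a price where \([U,\overline U]\) and \([V,\overline V]\) intersect. The only difference is the crossing chosen: you take the highest one, \(\sup\{\tau:\overline U(\tau)\le\overline V(\tau)\}\), while the paper takes the lowest level with \(L_{\le}(\tau)\ge H_>(\tau)\).

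The edge case you flag but leave open is \(\sum_i\lambda_i=0\), where your supremum is \(+\infty\). It is settled by taking any \(\tau>\max S\), which gives \(u=v=R=0\).
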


\begin{proof}[Proof sketch]
For a price \(z\), let
\[
L_<(z)=\sum_{i:a_i<z}\lambda_i,\qquad
L_{\le}(z)=\sum_{i:a_i\le z}\lambda_i,
\]
and
\[
H_>(z)=\sum_{j:b_j>z}\gamma_j,\qquad
H_{\ge}(z)=\sum_{j:b_j\ge z}\gamma_j.
\]
At \(z\), seller tie-breaking generates every rate in
\([L_<(z),L_{\le}(z)]\), and buyer tie-breaking generates every rate in
\([H_>(z),H_{\ge}(z)]\). Choose \(\tau\) as the first price level at which
\(L_{\le}(\tau)\ge H_>(\tau)\). Immediately below that level,
\(L_<(\tau)\le H_{\ge}(\tau)\). The two intervals therefore intersect.
Choose \(R\) in their intersection and set the two tie probabilities to
realize \(R\). If no type on one side has cost or value \(\tau\), that interval is a
singleton and its tie probability is irrelevant.
\end{proof}

Fix a rate-balanced price and tie probabilities from
Lemma~\ref{lem:balanced-exists}. For capacity \(C\), let
\(\mathcal A_{\tau,C}\) denote the resulting posted-price policy, and write
\(G_C:=G(\mathcal A_{\tau,C})\).

\begin{theorem}[Tight single-item guarantee]
\label{thm:single-main}
For every \(C\ge1\), the rate-balanced posted-price policy satisfies
\[
G_C=\frac{C}{C+1}\LP_1
\ge\frac{C}{C+1}\OPT_{\mathrm{off}}^\infty.
\]
Moreover, no online policy with capacity \(C\) can guarantee a larger ratio
against \(\OPT_{\mathrm{off}}^\infty\).
\end{theorem}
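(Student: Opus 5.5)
The plan is to compute \(G_C\) exactly from the stationary distribution of the inventory chain, and then identify the resulting expression with \(\LP_1\) by LP duality, using the posted price as a dual certificate.

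\textbf{Step 1: the inventory chain is uniform.} Fix \(\tau,p_\tau,q_\tau\) from Lemma~\ref{lem:balanced-exists}, so that \(u=v=R\).

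If \(R>0\), then \(N(t)\) is a birth--death chain on \(\{0,\dots,C\}\). It moves up at rate \(R\) in every non-full state and down at rate \(R\) in every nonempty state. Detailed balance \(\pi_kR=\pi_{k+1}R\) forces \(\pi_k=1/(C+1)\) for every \(k\). By Lemma~\ref{lem:poisson-arrivals}, accepted sellers of type \(i\) find free capacity with probability \(C/(C+1)\). Likewise, accepted buyers of type \(j\) find stock with probability \(C/(C+1)\). The acceptance randomization is independent of the state, so the ergodic theorem for finite irreducible chains converts these probabilities into long-run rates starting from empty inventory. This gives
\[
G_C=\frac{C}{C+1}\Bigl(\sum_j\beta_j\gamma_jb_j-\sum_i\alpha_i\lambda_ia_i\Bigr).
\]

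If \(R=0\), then \(L_<(\tau)=0\) and \(H_>(\tau)=0\). Hence no seller has \(a_i<\tau\) and no buyer has \(b_j>\tau\), so every pair has \((b_j-a_i)^+=0\). Then \(\LP_1=0=G_C\) and the claim is trivial.

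\textbf{Step 2: the bracket equals \(\LP_1\).} For the lower bound, the accepted seller mass \(\alpha_i\lambda_i\) and the accepted buyer mass \(\beta_j\gamma_j\) both total \(R\). Any transportation plan between them is therefore feasible in \eqref{eq:single-lp}. Each matched pair has \(a_i\le\tau\le b_j\), so \((b_j-a_i)^+=b_j-a_i\). The plan's value is therefore exactly the bracket, and the bracket is at most \(\LP_1\).

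For the upper bound, take the dual variables \(y_i=(\tau-a_i)^+\) for sellers and \(z_j=(b_j-\tau)^+\) for buyers. These are feasible, since \(y_i+z_j\ge (b_j-a_i)^+\). Tied types contribute zero to the dual objective, so it equals
\[
\sum_i\alpha_i\lambda_i(\tau-a_i)+\sum_j\beta_j\gamma_j(b_j-\tau).
\]
This is the bracket plus \(\tau(u-v)\), which is just the bracket. Weak duality then gives \(\LP_1\le\) bracket. Combining the two bounds gives equality, and \(\LP_1\ge\OPT^\infty_{\mathrm{off}}\) is taken from Appendix~\ref{app:flow-upper-bounds}.

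\textbf{Step 3: tightness.} Take one seller type with cost \(0\) and rate \(1\), and one buyer type with value \(1\) and rate \(1\).

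First I would show \(\OPT^\infty_{\mathrm{off}}=1\). The offline benchmark buys every seller and sells to a buyer whenever it has stock. This is a symmetric random walk reflected at \(0\), which is null recurrent. The fraction of buyers who arrive to an empty stock therefore tends to \(0\), so the sale rate tends to \(1\).

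Next I would bound any online policy with capacity \(C\). Since costs are zero, its gain equals its sale rate. I will prove pathwise, by induction over arrivals, that the greedy policy (accept every seller when not full, every buyer when nonempty) completes at least as many sales as any capacity-\(C\) policy \(\mathcal B\). The invariant to carry is that, on any arrival sequence,
\[
\text{greedy sales}+\text{greedy inventory}\ \ge\ \mathcal B\text{ sales}+\mathcal B\text{ inventory}
\quad\text{and}\quad
\text{greedy sales}\ge \mathcal B\text{ sales}.
\]
The case analysis is at boundary states, for example a buyer arriving when greedy is empty but \(\mathcal B\) is not. There the first inequality must be shown to absorb the discrepancy.

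Finally, by Step 1 with \(R=1\), the greedy policy has sale rate exactly \(C/(C+1)\). So no online policy exceeds ratio \(C/(C+1)\) on this instance, and the same holds on scaled variants of it.

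I expect this pathwise dominance invariant to be the main obstacle. If it proves awkward, my fallback is an averaged argument. Any policy's sale rate is at most \(\Pr[\text{nonempty}]\) and its purchase rate is at most \(\Pr[\text{not full}]\). I would then show that withholding acceptances can only push the time-average occupancy distribution toward the boundaries, which would again give \(C/(C+1)\).
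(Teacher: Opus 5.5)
Your proposal is correct. Steps~1 and~2 are the paper's argument in different words. The uniform birth--death chain with PASTA gives the gain formula. Your dual certificate \(y_i=(\tau-a_i)^+\), \(z_j=(b_j-\tau)^+\) is the paper's pairwise bound \((b_j-a_i)^+\le(b_j-\tau)^++(\tau-a_i)^+\) summed against the LP constraints. Your transportation plan plays the role of the paper's product coupling \(x_{ij}=\alpha_i\lambda_i\beta_j\gamma_j/R\).

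The tightness step takes a different route. The paper uses seller rate \(1+\varepsilon\), so the unlimited-inventory benchmark drifts upward and serves asymptotically every buyer. It then computes the greedy online gain \(1-\pi_0\) with \(\pi_0=\varepsilon/((1+\varepsilon)^{C+1}-1)\), lets \(\varepsilon\downarrow0\), and simply asserts in one line that accepting every feasible arrival is optimal online.

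You instead take exactly balanced rates. This makes the instance itself rate-balanced, so Step~1 gives greedy gain \(C/(C+1)\) directly and the ratio is attained exactly on one instance, with no limit. The price is that you need null recurrence of the critical reflected walk to certify \(\OPT_{\mathrm{off}}^\infty=1\): \(\Pr[Q(t)=0]\to0\), so by the Poisson sampling identity only \(o(T)\) buyers find the benchmark empty.

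Your pathwise invariant does close, so the fallback is unnecessary. Write \(g_s,g_n\) and \(b_s,b_n\) for the sales and inventory of greedy and of \(\mathcal B\), with invariant \(g_s+g_n\ge b_s+b_n\) and \(g_s\ge b_s\). Only two cases are nontrivial:
\begin{itemize}
\item A seller arrives with \(g_n=C\) while \(\mathcal B\) buys with \(b_n\le C-1\). Then \(b_s+b_n+1\le b_s+C\le g_s+C\) by the second inequality.
\item A buyer arrives with \(g_n=0\) while \(\mathcal B\) sells with \(b_n\ge1\). Then the first inequality gives \(g_s\ge b_s+b_n\ge b_s+1\).
\end{itemize}
All other cases preserve both inequalities trivially. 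This proves the step the paper only asserts. Because the argument is pathwise, it covers every capacity-\(C\) policy, including clairvoyant ones, which also justifies the paper's later same-capacity remark.

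One small nit: in the \(R=0\) case you can only conclude that no \emph{positive-rate} seller has \(a_i<\tau\) and no positive-rate buyer has \(b_j>\tau\). Zero-rate types contribute nothing to \(\LP_1\) anyway, since \(x_{ij}\le\lambda_i\) and \(x_{ij}\le\gamma_j\).
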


\begin{proof}
We first prove the guarantee. If \(R=0\), the absence of accepted strict types
implies
\[
\sum_j\gamma_j(b_j-\tau)^+
+\sum_i\lambda_i(\tau-a_i)^+=0.
\]
The positive-part bound below and feasibility of the zero matching then
give \(G_C=\LP_1=0\).
Assume henceforth that \(R>0\).

The inventory is a birth-death chain: a purchase adds one unit and a sale
removes one unit, with each feasible transition having rate \(R\).
Since \(R>0\), it is irreducible (every inventory level can be reached
from every other). Let \(\pi_n\) be the steady-state probability of inventory
level \(n\). Flow between neighboring levels balances:
\(\pi_nR=\pi_{n+1}R\) for \(n=0,\ldots,C-1\).
Thus all \(C+1\) levels have the same probability, giving the unique
stationary distribution~\cite[Sections~3.5 and~3.7]{Norris97}:
\[
\pi_n=\frac1{C+1},\qquad n=0,\ldots,C.
\]
By PASTA (Lemma~\ref{lem:poisson-arrivals}), arriving buyers find stock
with probability \(1-\pi_0\), and arriving sellers find free capacity with
probability \(1-\pi_C\). Both probabilities equal \(C/(C+1)\).
With independent acceptance coins at equality,
type-\(j\) buyers trade at rate
\(\beta_j\gamma_j(1-\pi_0)\), and type-\(i\) sellers trade at rate
\(\alpha_i\lambda_i(1-\pi_C)\). Summing buyer values minus seller costs
therefore gives
\begin{align}
G_C
&=\frac{C}{C+1}
\left(\sum_j\beta_j\gamma_jb_j-
      \sum_i\alpha_i\lambda_ia_i\right)\notag\\
&=\frac{C}{C+1}
\left(\sum_j\gamma_j(b_j-\tau)^+
      +\sum_i\lambda_i(\tau-a_i)^+\right),
\label{eq:single-gain}
\end{align}
For the second equality, add and subtract \(R\tau\), using
\(\sum_j\beta_j\gamma_j=\sum_i\alpha_i\lambda_i=R\).
The remaining terms satisfy
\(\beta_j(b_j-\tau)=(b_j-\tau)^+\) and
\(\alpha_i(\tau-a_i)=(\tau-a_i)^+\): strict acceptable types trade with
probability one, rejected types contribute zero, and equality types have
zero surplus regardless of their tie probabilities.

For every seller-buyer pair,
\[
(b_j-a_i)^+\le(b_j-\tau)^++(\tau-a_i)^+.
\]
Thus every feasible \(x\) in~\eqref{eq:single-lp} satisfies
\begin{align*}
\sum_{i,j}(b_j-a_i)^+x_{ij}
&\le
\sum_j(b_j-\tau)^+\sum_ix_{ij}
+\sum_i(\tau-a_i)^+\sum_jx_{ij}\\
&\le
\sum_j\gamma_j(b_j-\tau)^+
+\sum_i\lambda_i(\tau-a_i)^+.
\end{align*}
The first inequality applies the preceding pairwise bound and rearranges the
sums. The second uses the LP constraints
\(\sum_i x_{ij}\le\gamma_j\) and
\(\sum_j x_{ij}\le\lambda_i\), together with the nonnegativity of the
positive-part coefficients.
A feasible flow matching the accepted rates attains this upper bound.
Define
\[
x_{ij}:=\frac{(\alpha_i\lambda_i)(\beta_j\gamma_j)}{R}.
\]
Its marginal sums satisfy
\[
\sum_jx_{ij}=\alpha_i\lambda_i\le\lambda_i,
\qquad
\sum_ix_{ij}=\beta_j\gamma_j\le\gamma_j,
\]
so it is feasible for~\eqref{eq:single-lp}. Whenever \(x_{ij}>0\), the
acceptance rules imply \(a_i\le\tau\le b_j\). Therefore
\begin{align*}
\sum_{i,j}(b_j-a_i)^+x_{ij}
&=\sum_j\beta_j\gamma_jb_j-\sum_i\alpha_i\lambda_ia_i\\
&=\sum_j\gamma_j(b_j-\tau)^+
  +\sum_i\lambda_i(\tau-a_i)^+.
\end{align*}
The second equality again adds and subtracts \(R\tau\). Thus the matching
attains the LP upper bound, and~\eqref{eq:single-gain} proves the claimed
equality.

To establish tightness in our two-sided model, fix \(\varepsilon>0\) and consider
one free seller type of rate \(1+\varepsilon\) and one buyer type of value \(1\)
and rate \(1\). Since purchases are free and all sales are identical, accepting
every feasible arrival maximizes the gain of any online capacity-\(C\) policy.
Its inventory is a birth-death chain with rate ratio \(\rho=1+\varepsilon\),
whose empty-state probability is
\[
\pi_0=\frac{\rho-1}{\rho^{C+1}-1}
=\frac{\varepsilon}{(1+\varepsilon)^{C+1}-1}.
\]
The optimal online gain is \(1-\pi_0\). The unlimited-inventory offline
benchmark stores the free oversupply and serves asymptotically every buyer, so
its gain rate is \(1\). Sending \(\varepsilon\downarrow0\) gives
\[
1-\pi_0\longrightarrow1-\frac1{C+1}=\frac{C}{C+1}.\qedhere
\]
\end{proof}

At \(C=1\), the uniform stationary distribution gives both buyers and sellers
access probability \(1/2\), providing the median-style interpretation of rate
balance.

\section{Contribution-balanced prices for XOS markets}
\label{sec:multi-model}

\subsection{Model, policy, and flow benchmark}
\label{sec:multi-benchmark}

Let \(M=\{1,\ldots,m\}\). Item \(j\) has an arbitrary finite inventory
capacity \(C_j\ge1\). Writing \(N(t^-)\) for inventory immediately before
time \(t\), we have
\[
N(t)\in\prod_{j\in M}\{0,\ldots,C_j\},
\qquad
A_t=\{j:N_j(t^-)>0\},
\]
where \(A_t\) is the set of available items. Seller type
\(r\in R_j\) has cost \(a_{jr}\ge0\), can supply one unit of item \(j\), and
arrives at rate \(\lambda_{jr}\). We assume positive supply for every item:
\[
\Lambda_j:=\sum_{r\in R_j}\lambda_{jr}>0.
\]

Buyer type \(k\in K\) arrives at rate \(\gamma_k\). Its valuation
\(v_k:2^M\to\R_+\) is XOS: there is a finite family
of nonnegative additive clauses \(\mathcal A_k\) such that
\[
v_k(S)=\max_{w\in\mathcal A_k}\sum_{j\in S}w_j.
\]
The empty bundle has value zero, and adding items cannot decrease value.
Each buyer consumes at most one unit of every item in its selected bundle.
All type sets are finite, the Poisson processes are independent, and
\(\Gamma:=\sum_k\gamma_k>0\). Social gain is buyer value minus seller cost;
cash transfers do not enter this objective.

The discrete-time interpretation in the introduction applies to these
streams as well for rules that do not use timing information.

At a price vector \(p\in\R^m\), an arriving seller of item \(j\) is offered
\(p_j\). When \(N_j(t^-)<C_j\), the seller trades and receives \(p_j\) if
\(a_{jr}<p_j\), rejects if \(a_{jr}>p_j\), and may take either action at
equality according to any nonanticipatory rule. An arriving buyer chooses a
utility-maximizing available bundle
\[
D_k(A_t,p)\in\arg\max_{S\subseteq A_t}
\left\{v_k(S)-\sum_{j\in S}p_j\right\},
\]
and pays \(\sum_{j\in D_k(A_t,p)}p_j\). Tie-breaking among maximizers may be
randomized and may depend on calendar time, the current state, and the entire
past, provided it is nonanticipatory. Allocating bundle \(D_k(A_t,p)\) removes
one unit of every item in that bundle. Thus \(p_j\) is the common price paid to
a seller and charged to a buyer of item \(j\).

The multi-item relaxation again ignores arrival order and inventory
limits. Its variable \(y_{kS}\) is the bundle-\(S\) allocation rate for buyer
type \(k\), in bundles per unit time; \(z_{jr}\) is the purchase rate from
seller type \(r\) of item \(j\), in units of that item per unit time.
The constraints are
\begin{align}
\sum_{S\subseteq M}y_{kS}&\le\gamma_k &&\forall k,
\label{eq:multi-buyer}\\
\sum_k\sum_{S\ni j}y_{kS}&=\sum_{r\in R_j}z_{jr} &&\forall j,
\label{eq:multi-flow}\\
y_{kS}&\ge0 &&\forall k,\ S\subseteq M,
\label{eq:multi-nonnegative}\\
0\le z_{jr}&\le\lambda_{jr} &&\forall j,r.
\label{eq:multi-seller}
\end{align}
Constraint~\eqref{eq:multi-buyer} limits bundle allocation rates by buyer
arrival rates. Constraint~\eqref{eq:multi-flow} matches each item's purchase
and allocation rates, counting one unit per bundle containing it. The
remaining constraints make rates nonnegative and bound purchases by seller
arrival rates.
The maximum gain per unit of time in this relaxation is
\begin{equation}
\LP_m=
\max\left\{
\sum_{k,S}y_{kS}v_k(S)-\sum_{j,r}z_{jr}a_{jr}:
\eqref{eq:multi-buyer}\text{ through }\eqref{eq:multi-seller}
\right\}.
\label{eq:multi-lp}
\end{equation}

For any online or offline policy \(\mathcal A\), let
\(\mathrm{gain}_{\mathcal A}(0,T)\) be its total gain over \([0,T]\), and define
\[
G(\mathcal A):=\liminf_{T\to\infty}\frac1T
\mathbb E[\mathrm{gain}_{\mathcal A}(0,T)].
\]
For a price vector \(p\) and an admissible tie-breaking rule fixed in context,
let \(\mathcal A_p\) be the induced posted-price policy, and abbreviate
\[
\mathrm{gain}_p(0,T):=\mathrm{gain}_{\mathcal A_p}(0,T),
\qquad
G(p):=G(\mathcal A_p).
\]
Let \(\OPT_{\mathrm{off},m}^{\infty}\) be the supremum of \(G(\mathcal A)\)
over clairvoyant offline policies with unlimited inventory that start empty
and allocate only previously acquired units.
Let \(\OPT_{\mathrm{on},m}^{\infty}\) be the supremum of \(G(\mathcal A)\)
over randomized nonanticipatory online policies with the same storage and
temporal constraints. They observe arriving types and may directly choose
any available bundle for each buyer.

Appendix~\ref{app:flow-upper-bounds} proves that every temporally feasible
online or clairvoyant policy starting empty satisfies, for every \(T>0\),
\[
\mathbb E[\mathrm{gain}_{\mathcal A}(0,T)]\le T \cdot \LP_m.
\]
Sampling an optimal flow gives an
unrestricted online allocation policy with long-run gain at least
\(C_{\min}\LP_m/(C_{\min}+1)\)
(Proposition~\ref{prop:lp-sampling}, Section~\ref{sec:lp-sampling}).
It starts empty, respects the capacities, and directly chooses buyers'
allocations. With unlimited storage, it may impose arbitrarily large finite
common capacities, giving the benchmark equality
\(\OPT_{\mathrm{on},m}^{\infty}=\OPT_{\mathrm{off},m}^{\infty}=\LP_m\)
(Corollary~\ref{cor:unlimited-lp}).
The same LP upper-bounds every finite-capacity offline policy.

\subsection{The utility-surplus decomposition}
\label{sec:decomposition}

We explain the argument in steady state, writing \(N\) for the inventory
vector. Appendix~\ref{app:finite-horizon-multi} gives the full proof for
arbitrary nonanticipatory tie-breaking.

Define
\[
q_j^B:=\Pr[N_j>0],
\qquad
q_j^S:=\Pr[N_j<C_j].
\]
Thus \(q_j^B\) is the probability that item \(j\) is available to a buyer,
while \(q_j^S\) is the probability that it has free capacity for a seller.
We make no independence assumption on the coordinates
of \(N\). Since every inventory state satisfies
\[
\one\{N_j>0\}+\one\{N_j<C_j\}\ge1,
\]
we have
\begin{equation}
q_j^B+q_j^S\ge1.
\label{eq:access-sum}
\end{equation}

Fix a feasible flow solution \((y,z)\) in~\eqref{eq:multi-lp}, and let
\[
\mathcal S(y):=\{(k,S):y_{kS}>0\}
\]
be the support of its bundle variables. For every
\((k,S)\in\mathcal S(y)\), fix a supporting XOS clause \(w^{kS}\):
\[
v_k(S)=\sum_{j\in S}w^{kS}_j,
\qquad
v_k(T)\ge\sum_{j\in T}w^{kS}_j
\quad\text{for every }T\subseteq M.
\]
For each item define the buyer-utility and seller-surplus potentials
\begin{align}
B_j(p_j)
&:=\sum_{\substack{(k,S)\in\mathcal S(y)\\j\in S}}
y_{kS}(w^{kS}_j-p_j)^+,
\label{eq:buyer-potential}\\
L_j(p_j)
&:=\sum_{r\in R_j}\lambda_{jr}(p_j-a_{jr})^+.
\label{eq:seller-potential}
\end{align}

\begin{lemma}[Long-run utility-surplus decomposition]
\label{lem:stationary-decomposition}
In steady state, every nonnegative price vector \(p\) satisfies
\begin{equation}
G(p)\ge\sum_{j\in M}
\left(q_j^B B_j(p_j)+q_j^S L_j(p_j)\right).
\label{eq:stationary-decomposition}
\end{equation}
\end{lemma}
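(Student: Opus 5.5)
The plan is to split the long-run gain into buyer utility plus seller surplus, using that payments at the common prices cancel in the long run, and then lower-bound each part separately.

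\emph{Step 1: rewrite the gain rate.} In steady state each item's purchase rate equals its sale rate, since inventory is bounded and the normalized difference vanishes as \(T\to\infty\). Adding and subtracting \(\sum_j p_j\cdot(\text{rate of item-}j\text{ units traded})\) then gives
\[
G(p)=\Gamma\,\mathbb E\Bigl[v_K(D_K)-\sum_{j\in D_K}p_j\Bigr]+\sum_j\sum_{r}\lambda_{jr}\,\mathbb E\bigl[\one\{\text{trade}\}(p_j-a_{jr})\bigr].
\]
Here \(K\) is a buyer type drawn with probability \(\gamma_k/\Gamma\), and \(D_K\) is its chosen bundle at the inventory it sees on arrival. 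By PASTA (Lemma~\ref{lem:poisson-arrivals}), that inventory has the stationary law of \(N\), independent of the arriving type, and the same holds for arriving sellers.

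\emph{Step 2: seller surplus.} A seller of type \(r\) facing free capacity trades whenever \(a_{jr}<p_j\). Trades at equality contribute zero, and no seller trades at a loss. So the seller term equals \(\sum_r\lambda_{jr}\Pr[N_j<C_j](p_j-a_{jr})^+=q_j^S L_j(p_j)\) for each item. This part is exact.

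\emph{Step 3: buyer utility.} Condition on the type-\(k\) arrival and the available set \(A=\{j:N_j>0\}\). The realized utility is \(\max_{T\subseteq A}\{v_k(T)-p(T)\}\), regardless of tie-breaking. Take the following convex combination of test bundles: with weight \(y_{kS}/\gamma_k\) choose bundle \(T_S:=\{j\in S\cap A: w^{kS}_j>p_j\}\), and put the remaining weight on the empty bundle. Constraint~\eqref{eq:multi-buyer} makes these weights sum to at most one. Since \(w^{kS}\) is a supporting clause,
\[
v_k(T_S)-p(T_S)\ge\sum_{j\in T_S}(w^{kS}_j-p_j)=\sum_{j\in S}\one\{N_j>0\}(w^{kS}_j-p_j)^+.
\]
The maximum dominates this average, so utility is at least \(\sum_S (y_{kS}/\gamma_k)\sum_{j\in S}\one\{N_j>0\}(w^{kS}_j-p_j)^+\). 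Multiply by \(\gamma_k\), take expectations over the stationary \(N\) (only marginals \(q_j^B\) appear, so correlations between items do not matter), and sum over \(k\). This yields exactly \(\sum_j q_j^B B_j(p_j)\).

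\emph{Step 4: combine.} Adding Steps 2 and 3 gives \eqref{eq:stationary-decomposition}.

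The main obstacle is Step 1, justifying the cancellation of payments. It needs care because the realized purchase and sale rates per item must coincide. In steady state this follows from boundedness of \(N_j\): \(\mathbb E[N_j(T)]-\mathbb E[N_j(0)]\) is bounded, so \(\frac1T\) times the transfer term is \(O(\sum_j p_jC_j/T)\to0\). It also needs PASTA to apply to possibly history-dependent tie-breaking, which is exactly the nonanticipation hypothesis. Everything else is pointwise in the inventory state.
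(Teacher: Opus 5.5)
Your proposal is correct and follows the paper's proof essentially step for step. It uses the same three ingredients: payment cancellation because each bounded inventory forces equal long-run purchase and sale rates; the exact seller-surplus identity via PASTA; and the Feldman--Gravin--Lucier supporting-clause comparison on the buyer side, applied itemwise so that only the marginals \(q_j^B,q_j^S\) enter. Your deterministic convex combination of test bundles \(T_S\) is equivalent to the paper's independently drawn analysis-only reference bundle, and using \(w_j^{kS}>p_j\) instead of \(\ge\) in \(T_S\) changes nothing.
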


\begin{proof}
The buyer-side argument follows Feldman, Gravin, and Lucier~\cite{FGL15}.
For an arrival of type \(k\), independently draw an analysis-only reference
bundle \(S\) with probability \(y_{kS}/\gamma_k\), using the empty bundle for
the remaining probability. Types with \(\gamma_k=0\) have no positive flow
and are ignored. Given the available set \(A\), the buyer could choose
\[
T_{kS}(A):=\{j\in S\cap A:w^{kS}_j\ge p_j\}.
\]
Optimality of the demanded bundle and the supporting-clause inequality give
\begin{align*}
v_k(D_k(A,p))-\sum_{j\in D_k(A,p)}p_j
&\ge v_k(T_{kS}(A))-\sum_{j\in T_{kS}(A)}p_j\\
&\ge\sum_{j\in S}(w^{kS}_j-p_j)^+\one\{j\in A\}.
\end{align*}

Let \(U_B(p)\) denote the long-run buyer-utility rate. By PASTA
(Lemma~\ref{lem:poisson-arrivals}), type-\(k\) buyers find item \(j\)
available with probability \(q_j^B\). The analysis-only reference bundle is
drawn independently of inventory and arrivals. Averaging the preceding
inequality over these bundles and buyer types therefore gives
\[
U_B(p)\ge\sum_jq_j^B B_j(p_j).
\]

Similarly, let \(U_S(p)\) denote the long-run seller-surplus rate. A seller
with \(a_{jr}<p_j\) trades whenever item \(j\) is nonfull, while a seller at
equality contributes zero surplus regardless of its action. By PASTA,
type-\((j,r)\) sellers find free capacity with probability \(q_j^S\),
so they arrive with free capacity at rate \(\lambda_{jr}q_j^S\).
Multiplying by the surplus
\((p_j-a_{jr})^+\) and summing yields
\[
U_S(p)=\sum_jq_j^S L_j(p_j).
\]
Neither application requires independence between the item inventories.

Because inventory is bounded, purchases and allocations of every item have
the same long-run rate. Payments at the common posted prices therefore cancel,
and
\[
G(p)=U_B(p)+U_S(p).
\]
Combining the three displays proves~\eqref{eq:stationary-decomposition}. The
finite-horizon accounting identity behind the last equality is recorded in
Appendix~\ref{app:finite-horizon-multi}.
\end{proof}

\subsection{Two routes to a one-half guarantee}
\label{sec:constructive-prices}

For the fixed feasible solution and supporting clauses above, define
\begin{align*}
x_j
&:=\sum_k\sum_{S\ni j}y_{kS}
=\sum_{r\in R_j}z_{jr},\\
W_j
&:=\sum_{\substack{(k,S)\in\mathcal S(y)\\j\in S}}
y_{kS}w^{kS}_j,\\
A_j
&:=\sum_{r\in R_j}z_{jr}a_{jr}.
\end{align*}
Here \(x_j\) is the LP flow of item \(j\); when \(x_j>0\), the ratios
\(W_j/x_j\) and \(A_j/x_j\) are its average supported buyer value and average
seller cost.

The supporting clauses decompose the value of the flow solution itemwise:
\begin{align}
\sum_{k,S}y_{kS}v_k(S)-\sum_{j,r}z_{jr}a_{jr}
&=\sum_j(W_j-A_j).
\label{eq:itemwise-lp-master}
\end{align}

\begin{lemma}[Master itemwise bound]
\label{lem:master-itemwise}
In steady state, every nonnegative price vector \(p\) satisfies
\begin{equation}
G(p)\ge
\sum_j\left[
q_j^B(W_j-x_jp_j)+q_j^S(x_jp_j-A_j)
\right].
\label{eq:master-itemwise}
\end{equation}
\end{lemma}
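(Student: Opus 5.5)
The plan is to derive \eqref{eq:master-itemwise} from the utility-surplus decomposition of Lemma~\ref{lem:stationary-decomposition}, by lower-bounding each potential \(B_j(p_j)\) and \(L_j(p_j)\) by a linear expression. The probabilities \(q_j^B,q_j^S\) are nonnegative, so it is enough to prove, for every item \(j\),
\[
B_j(p_j)\ge W_j-x_jp_j,
\qquad
L_j(p_j)\ge x_jp_j-A_j .
\]
Multiplying the first bound by \(q_j^B\) and the second by \(q_j^S\), summing over \(j\), and substituting into \eqref{eq:stationary-decomposition} then gives the claim.

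For the buyer bound, use \(r^+\ge r\) termwise in \eqref{eq:buyer-potential}:
\[
B_j(p_j)\ge\sum_{(k,S)\in\mathcal S(y),\,j\in S}y_{kS}\,(w^{kS}_j-p_j)=W_j-p_j\sum_{(k,S)\in\mathcal S(y),\,j\in S}y_{kS}.
\]
Pairs outside the support have \(y_{kS}=0\), so the last sum equals \(\sum_k\sum_{S\ni j}y_{kS}=x_j\). This gives \(B_j\ge W_j-x_jp_j\).

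For the seller bound, use \(\lambda_{jr}\ge z_{jr}\ge0\) from \eqref{eq:multi-seller} together with nonnegativity of positive parts, and then again \(r^+\ge r\):
\[
L_j(p_j)\ge\sum_{r}z_{jr}(p_j-a_{jr})^+\ge\sum_r z_{jr}(p_j-a_{jr})=p_j\sum_rz_{jr}-A_j.
\]
By the flow constraint \eqref{eq:multi-flow}, \(\sum_r z_{jr}=x_j\), so \(L_j\ge x_jp_j-A_j\).

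There is no real obstacle once Lemma~\ref{lem:stationary-decomposition} is assumed. The only points needing care are the following:
\begin{itemize}
\item the support restriction in \(B_j\) matches the definition of \(x_j\);
\item the seller potential uses arrival rates \(\lambda_{jr}\) rather than flow rates \(z_{jr}\), which is why the seller bound needs the step \(\lambda_{jr}\ge z_{jr}\);
\item the right-hand side of \eqref{eq:master-itemwise} may be negative, but this is harmless because both bounds are simply weaker than the decomposition.
\end{itemize}
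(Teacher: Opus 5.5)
Your proposal is correct and matches the paper's proof essentially step for step. You drop the positive parts to get \(B_j(p_j)\ge W_j-x_jp_j\), use \(0\le z_{jr}\le\lambda_{jr}\) and \(r^+\ge r\) to get \(L_j(p_j)\ge x_jp_j-A_j\), and substitute these bounds into Lemma~\ref{lem:stationary-decomposition} with the nonnegative weights \(q_j^B,q_j^S\).
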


\begin{proof}
Dropping positive parts in the buyer potential gives
\[
B_j(p_j)\ge W_j-x_jp_j.
\]
Moreover, \(0\le z_{jr}\le\lambda_{jr}\) and
\((p_j-a_{jr})^+\ge p_j-a_{jr}\), so
\[
L_j(p_j)
\ge\sum_r z_{jr}(p_j-a_{jr})^+
\ge x_jp_j-A_j.
\]
Substitution into Lemma~\ref{lem:stationary-decomposition} proves the claim.
\end{proof}

The first route extends rate balance to several items by controlling
inventory availability.

\begin{corollary}[Rate-balance route]
\label{cor:half-availability}
In steady state, suppose a nonnegative price vector satisfies
\[
\min\{q_j^B,q_j^S\}\ge\alpha
\]
for every item with positive LP flow, where \(\alpha\in[0,1]\). Then
\[
G(p)\ge\alpha
\left(\sum_{k,S}y_{kS}v_k(S)-\sum_{j,r}z_{jr}a_{jr}\right).
\]
In particular, \(\alpha=1/2\) and an optimal flow solution yield a \(1/2\)
guarantee.
\end{corollary}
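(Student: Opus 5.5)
The plan is to go back to Lemma~\ref{lem:stationary-decomposition} rather than Lemma~\ref{lem:master-itemwise}. The itemwise terms \(W_j-x_jp_j\) and \(x_jp_j-A_j\) in~\eqref{eq:master-itemwise} may be negative, so multiplying them by a uniform lower bound \(\alpha\) on \(q_j^B,q_j^S\) is not legitimate. The potentials \(B_j(p_j)\) and \(L_j(p_j)\), by contrast, are sums of positive parts with nonnegative weights \(y_{kS}\) and \(\lambda_{jr}\). Hence \(B_j(p_j)\ge0\) and \(L_j(p_j)\ge0\) for every item, and the lower bounds on access probabilities can be applied term by term.

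\textbf{Step 1: split items by LP flow.} Partition the items according to whether \(x_j>0\).
\begin{itemize}
\item For \(x_j>0\), the hypothesis gives \(q_j^B B_j(p_j)\ge\alpha B_j(p_j)\) and \(q_j^S L_j(p_j)\ge\alpha L_j(p_j)\).
\item For \(x_j=0\), the contribution \(q_j^BB_j+q_j^SL_j\) is at least \(0\). Also \(W_j=A_j=0\) for such items. Indeed, no support pair \((k,S)\in\mathcal S(y)\) contains \(j\), since otherwise \(x_j\ge y_{kS}>0\). Likewise \(\sum_r z_{jr}=x_j=0\) with \(z\ge0\) forces every \(z_{jr}=0\).
\end{itemize}
So every item satisfies \(q_j^BB_j+q_j^SL_j\ge\alpha(B_j+L_j)\) or has \(W_j-A_j=0\).

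\textbf{Step 2: lower-bound \(B_j+L_j\) on positive-flow items.} Reuse the two inequalities from the proof of Lemma~\ref{lem:master-itemwise}: \(B_j(p_j)\ge W_j-x_jp_j\) and \(L_j(p_j)\ge x_jp_j-A_j\). The price terms cancel in the sum, giving \(B_j+L_j\ge W_j-A_j\) independently of \(p_j\).

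\textbf{Step 3: combine and conclude.} Combining Steps~1 and~2 with Lemma~\ref{lem:stationary-decomposition} gives
\[
G(p)\ge\alpha\sum_j(W_j-A_j).
\]
Here items with zero flow are added at no cost, because their \(W_j-A_j\) vanishes and their actual contribution is nonnegative. The itemwise identity~\eqref{eq:itemwise-lp-master} then turns the right-hand side into \(\alpha\) times the flow objective, which is the claim.

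For the particular case, take \((y,z)\) optimal in~\eqref{eq:multi-lp}, so the flow objective equals \(\LP_m\). The upper bound \(\LP_m\ge\OPT_{\mathrm{off},m}^\infty\) then yields the \(1/2\) guarantee.

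\textbf{Main obstacle.} The only delicate point is the one handled first: applying \(\alpha\) must go through the nonnegative potentials and not the signed master bound. Zero-flow items must also be treated separately, since the hypothesis says nothing about their access probabilities. Everything else is bookkeeping.
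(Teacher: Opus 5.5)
Your proposal is correct and matches the paper's proof. Both arguments apply $\alpha$ to the nonnegative potentials $B_j,L_j$ from Lemma~\ref{lem:stationary-decomposition} on positive-flow items, use $B_j+L_j\ge W_j-A_j$ from the two bounds in Lemma~\ref{lem:master-itemwise}'s proof, and sum via~\eqref{eq:itemwise-lp-master}; your check that zero-flow items have $W_j=A_j=0$ spells out a step the paper leaves implicit.
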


\begin{proof}
Since the potentials are nonnegative,
Lemma~\ref{lem:stationary-decomposition} gives
\[
G(p)\ge\alpha
\sum_{j:x_j>0}\bigl(B_j(p_j)+L_j(p_j)\bigr).
\]
The two lower bounds in the proof of Lemma~\ref{lem:master-itemwise} imply
\[
B_j(p_j)+L_j(p_j)\ge W_j-A_j.
\]
Summing and using~\eqref{eq:itemwise-lp-master} proves the claim.
\end{proof}

At unit capacity, \(q_j^S=1-q_j^B\), so the \(\alpha=1/2\) condition in
Corollary~\ref{cor:half-availability} requires half availability.
For one item,
Lemma~\ref{lem:balanced-exists} constructs the required price and tie-breaking.
The following theorem supplies simultaneous rate balance for several items
when the randomized tie-breaking rule is chosen together with the prices.

\begin{theorem}[Existence of multi-item rate-balanced prices]
\label{thm:multi-rate-balance}
For XOS buyers and arbitrary finite item capacities, there exist nonnegative
item prices, a fixed randomized tie-breaking rule, and an optimal flow
solution such that the inventory process starting empty converges to a
stationary distribution satisfying
\[
\Pr[N_j=0]=\Pr[N_j=C_j],
\qquad q_j^B=q_j^S\ge\frac12
\]
for every item with positive flow in that solution. Seller tie probabilities
depend only on seller type; buyer tie probabilities depend only on buyer
type and the available items. The resulting policy satisfies
\[
G(p)\ge\frac12\LP_m
\ge\frac12\OPT_{\mathrm{off},m}^{\infty}.
\]
\end{theorem}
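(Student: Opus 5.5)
The plan is to obtain the prices and the tie-breaking rule as a fixed point of a tâtonnement-style map. Each item's price is pushed up when the item is empty more often than full, and down when it is full more often than empty. Equal empty and full probabilities then give \(q_j^B=q_j^S\), and~\eqref{eq:access-sum} upgrades this to \(q_j^B=q_j^S\ge\frac12\). The final inequality \(G(p)\ge\frac12\LP_m\ge\frac12\OPT_{\mathrm{off},m}^{\infty}\) follows from Corollary~\ref{cor:half-availability} with \(\alpha=1/2\) and the LP upper bound of Appendix~\ref{app:flow-upper-bounds}. So the real content is existence, together with producing an optimal flow whose positive-flow items are all balanced.

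\emph{Step 1: a compact, convex strategy space.} I would encode each item's price and seller tie probability in one extended coordinate \(t_j\in[0,\bar P]\), where \(\bar P\) exceeds every cost and every \(v_k(M)\). Moving \(t_j\) through a cost level \(a_{jr}\) raises the tie probability of seller type \((j,r)\) continuously from \(0\) to \(1\), exactly as in the proof of Lemma~\ref{lem:balanced-exists}. For buyers, I would let each pair (type \(k\), available set \(A\)) choose a mixed strategy over \(\arg\max_{S\subseteq A}\{v_k(S)-p(S)\}\). This gives an upper hemicontinuous, convex-valued correspondence in \(p\), and its tie probabilities depend only on \(k\) and \(A\), as the statement requires.

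\emph{Step 2: continuity of the stationary law.} For a strategy profile, the inventory vector is a finite continuous-time Markov chain. To make its stationary distribution unique and continuous in the parameters, I would first perturb by \(\varepsilon\): every seller of item \(j\) is accepted with probability at least \(\varepsilon\), and a fictitious buyer removes one unit of each item at rate \(\varepsilon\). The perturbed chain is irreducible, so its stationary law \(\pi^\varepsilon\) depends continuously on the parameters.

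\emph{Step 3: the fixed point.} Define the correspondence
\[
t_j\mapsto \min\Bigl\{\bar P,\max\bigl\{0,\;t_j+\pi^\varepsilon(N_j=0)-\pi^\varepsilon(N_j=C_j)\bigr\}\Bigr\},
\]
combined with the best-response correspondence for buyer tie rules. Kakutani's theorem gives a fixed point. I would then let \(\varepsilon\to0\) along a convergent subsequence; stationarity passes to the limit. In the unperturbed chain, convergence from the empty start holds on the recurrent class containing the empty state.

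\emph{Step 4: boundaries and the choice of optimal flow (main obstacle).} This is the step I expect to be hardest. At interior \(t_j\), the fixed point gives \(\Pr[N_j=0]=\Pr[N_j=C_j]\). At \(t_j=\bar P\), no buyer ever wants item \(j\) at price \(p_j\). At \(t_j=0\), only zero-cost sellers participate; since \(\Lambda_j>0\) ensures some supply at high prices, this boundary should still leave item \(j\) often full. The hard part is showing that boundary items can be given zero flow in some optimal solution of~\eqref{eq:multi-lp}. I would argue this through LP duality. First, interpret the limiting prices as dual item prices. Then build a flow from the realized stationary rates: purchases \(\alpha_{jr}\lambda_{jr}\Pr[N_j<C_j]\) and bundle rates \(\gamma_k\Pr[\text{choose }S]\). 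Bounded inventory makes each item's purchase and sale rates equal, so this flow is feasible. Its optimality would be certified by complementary slackness: sellers trade only at \(a_{jr}\le p_j\), buyers take demand bundles, and prices act as duals of~\eqref{eq:multi-flow}. If exact optimality of the realized flow fails, I would fall back to using it directly. Lemma~\ref{lem:master-itemwise} still applies to any feasible flow, so it would suffice to show that the realized flow's value equals \(\LP_m\) by the same add-and-subtract-price computation used in the proof of Theorem~\ref{thm:single-main}.
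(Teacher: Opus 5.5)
Your overall architecture matches the paper's: perturb so the stationary law is unique and continuous, take a fixed point of a price-adjustment map, let $\varepsilon\downarrow0$, and finish with Corollary~\ref{cor:half-availability}. But Step~4, which you correctly flag as the crux, fails as proposed.

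The flow built from realized stationary rates has objective exactly $G(p)$, because its rates are the policy's long-run purchase and allocation rates. Asserting that its value is $\LP_m$ therefore asserts $G(p)=\LP_m$. This is false already for one item: equal empty and full probabilities force $u=v$, and Theorem~\ref{thm:single-main} gives $G_C=\frac{C}{C+1}\LP_1$. Complementary slackness fails for the same reason. Any dual with $\mu=p$ has $\eta_{jr}\ge p_j-a_{jr}>0$ for strictly profitable sellers, yet they are turned away whenever $j$ is full. Buyers also maximize over $A_t$ rather than $M$, so positive bundle rates, including the empty bundle when stock is missing, sit on non-demand bundles. Feeding this flow to Corollary~\ref{cor:half-availability} gives only $G(p)\ge G(p)/2$. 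The single-item add-and-subtract step used \emph{potential} rates with $u=v$, i.e., a market-clearing price. Equal endpoint probabilities do not make multi-item prices market-clearing.

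Your reading of the zero face is also inverted. At price zero, supply is smallest and demand largest, so the item should be \emph{rarely} full, and that is what must be shown to exclude the face. With exact best-response correspondences you cannot show it. Buyers indifferent about adding $j$ at zero price may always omit it, leaving only the fictitious removal to compete with the $\varepsilon$-floor on seller acceptance.

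Three ingredients are missing.

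First, fix the items allowed to carry flow before any fixed point. Let $\underline a_j:=\min_r a_{jr}$, $\overline v_j:=\max_k v_k(\{j\})$, and $\mathcal J=\{j:\underline a_j<\overline v_j\}$. The XOS marginal bound $v_k(S)-v_k(S\setminus\{j\})\le v_k(\{j\})$ shows that deleting $j\notin\mathcal J$ from all bundles and zeroing its purchases loses at most $x_j\overline v_j$ of value. It saves at least $x_j\underline a_j\ge x_j\overline v_j$ of cost, so some optimal flow is supported on $\mathcal J$. The remaining items get price zero with sellers rejected, and they stay empty.

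Second, for $j\in\mathcal J$ the paper replaces best responses by logit choice. At $p_j=0$, pairing $D$ with $D\cup\{j\}$ shows $j$ is taken with probability at least $1/2$. At $p_j=H$, all sellers accept while $j$ is taken with probability at most $e^{-1/\varepsilon}$. So both faces have strict signs uniformly in the other prices, and every Brouwer fixed point is interior.

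Third, your claim that convergence from empty holds ``on the recurrent class containing the empty state'' is unjustified. The empty state may be transient and several closed classes may exist, so the balanced limit $\pi^*$ need not be the law reached from empty; an item whose seller acceptance vanishes in the limit simply stays empty. The paper proves $u_j^*>0$ for $j\in\mathcal J$. Otherwise a buyer with $v_k(\{j\})=\overline v_j>p_j^*$ eventually consumes $j$, which is never restocked, so every stationary law has $N_j=0$ almost surely, contradicting endpoint balance. Then the full vector lies in the unique closed recurrent class.
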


Appendix~\ref{app:multi-rate-balance} gives the full fixed-point proof
of Theorem~\ref{thm:multi-rate-balance}. It establishes existence; a
polynomial-time construction remains open.

The second route balances the two certified contributions rather than the
purchase and sale rates.
For \(x_j>0\), set
\begin{equation}
p_j:=\frac{W_j+A_j}{2x_j};
\label{eq:constructive-price}
\end{equation}
if \(x_j=0\), set \(p_j=0\).

For every positive-flow item, this price satisfies
\begin{equation}
W_j-x_jp_j=x_jp_j-A_j=\frac{W_j-A_j}{2}.
\label{eq:contribution-balance}
\end{equation}

\begin{theorem}[Contribution-balanced multi-item guarantee]
\label{thm:constructive-multi}
For arbitrary finite capacities \(C_j\ge1\), the posted-price policy
using~\eqref{eq:constructive-price}, starting empty and under any admissible
nonanticipatory tie-breaking rule, satisfies
\[
G(p)\ge\frac12
\left(\sum_{k,S}y_{kS}v_k(S)-\sum_{j,r}z_{jr}a_{jr}\right).
\]
For an optimal flow solution, every horizon \(T>0\) also satisfies
\[
\mathbb E[\mathrm{gain}_p(0,T)]
\ge\frac{T}{2}\LP_m-\sum_jp_jC_j.
\]
The subtracted term bounds the posted-price value of inventory purchased but
not yet sold. Hence \(G(p)\ge\LP_m/2\ge\OPT_{\mathrm{off},m}^{\infty}/2\).
These bounds do not require a stationary inventory distribution.
\end{theorem}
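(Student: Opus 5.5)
The plan is to prove a finite-horizon version of the utility–surplus decomposition directly, so that no stationary distribution is needed, and then apply contribution balance itemwise.

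\textbf{Accounting identity.} Over $[0,T]$, gain equals buyer utility plus seller surplus plus the payment balance. That balance is the total paid to sellers minus the total collected from buyers, namely $\sum_j p_j N_j(T)$ with a sign: the intermediary pays out $p_j$ per unit purchased and recoups $p_j$ per unit sold. Concretely,
\[
\mathrm{gain}_p(0,T)=U_B(0,T)+U_S(0,T)-\sum_j p_jN_j(T)\ge U_B(0,T)+U_S(0,T)-\sum_j p_jC_j .
\]

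\textbf{Time-integrated lower bounds.} Next I bound the expectations of $U_B$ and $U_S$ by integrals of availability probabilities, replacing the steady-state use of PASTA. By the Poisson compensator (Lévy/Campbell) formula, for any nonanticipatory state process, the expected sum over type-$k$ arrivals in $[0,T]$ of a predictable function of $N(t^-)$ equals $\gamma_k\int_0^T\mathbb E[\,\cdot\,]\,dt$.

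On the buyer side, I draw an analysis-only reference bundle independently per arrival and apply the FGL comparison from the proof of Lemma~\ref{lem:stationary-decomposition}. This gives
\[
\mathbb E[U_B(0,T)]\ge\sum_j B_j(p_j)\int_0^T\Pr[N_j(t)>0]\,dt .
\]
On the seller side, strict sellers trade whenever the item is nonfull, and sellers at equality add zero surplus. Hence
\[
\mathbb E[U_S(0,T)]=\sum_j L_j(p_j)\int_0^T\Pr[N_j(t)<C_j]\,dt .
\]
Arbitrary nonanticipatory tie-breaking is harmless here. The buyer bound uses only the optimality of the chosen bundle, and the seller bound uses only that ties give zero surplus. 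In both, the integrand is evaluated at $N(t^-)$, which is independent of the arrival at $t$.

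\textbf{Contribution balance.} For positive-flow items, the bounds in Lemma~\ref{lem:master-itemwise}, together with~\eqref{eq:contribution-balance} and nonnegativity of the potentials, give
\[
B_j(p_j)\ge\Bigl(\tfrac{W_j-A_j}{2}\Bigr)^+,\qquad L_j(p_j)\ge\Bigl(\tfrac{W_j-A_j}{2}\Bigr)^+ .
\]
Pointwise in $t$ we have $\one\{N_j>0\}+\one\{N_j<C_j\}\ge1$. Therefore the $j$-th contribution is at least $T(W_j-A_j)^+/2\ge T(W_j-A_j)/2$. Items with $x_j=0$ have $W_j=A_j=0$ and nonnegative potentials. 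Summing over items via~\eqref{eq:itemwise-lp-master} yields
\[
\mathbb E[\mathrm{gain}_p(0,T)]\ge\tfrac T2\Bigl(\sum y v-\sum z a\Bigr)-\sum_j p_jC_j .
\]
For an optimal flow, the right-hand side is $\tfrac T2\LP_m-\sum_j p_jC_j$. Dividing by $T$ and taking $\liminf$ gives the guarantee for $G(p)$. The comparison with $\OPT^\infty_{\mathrm{off},m}$ then follows from the LP upper bound in Appendix~\ref{app:flow-upper-bounds}.

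I also need $p_j\ge0$ so that the terminal-inventory correction has the stated sign. This holds because $W_j,A_j\ge0$.

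\textbf{Main obstacle.} The delicate step is justifying the compensator identity under arbitrary history-dependent, randomized tie-breaking. I must check that the buyer's realized utility is bounded below by a quantity depending only on $N(t^-)$ and the independent reference bundle. I must also check that the tie-breaking randomness, even though it may correlate inventories across items, never enters these integrands beyond $N(t^-)$. I would handle this by conditioning on the pre-arrival history $\mathcal F_{t^-}$ and applying the marked-Poisson-process martingale property.
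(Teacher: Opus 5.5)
Your proposal is correct and follows essentially the same route as the paper: the pathwise accounting identity with the terminal-inventory correction $-\sum_j p_jN_j(T)\ge-\sum_j p_jC_j$, the finite-horizon Poisson sampling identity (the paper's Lemma~\ref{lem:poisson-compensation}) applied to the FGL reference-bundle comparison and to seller surplus, and the itemwise step combining $B_j(p_j),L_j(p_j)\ge\bigl((W_j-A_j)/2\bigr)^+$ with the occupation-time inequality $q_j^B(T)+q_j^S(T)\ge1$. Your explicit check that $p_j\ge0$, which the sign of the terminal-inventory correction requires, is a detail the paper leaves implicit.
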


\begin{proof}
Fix \(T>0\). Let \(q_j^B(T)\) and \(q_j^S(T)\) be the expected fractions
of time in \([0,T]\) during which item \(j\) has stock and free capacity,
respectively. Thus \(q_j^B(T)+q_j^S(T)\ge1\).

Let \(d_j=(W_j-A_j)/2\); zero-flow items have \(W_j=A_j=d_j=0\).
For positive-flow items, \eqref{eq:contribution-balance} and the bounds in
Lemma~\ref{lem:master-itemwise}'s proof give
\(B_j(p_j)\ge d_j\) and \(L_j(p_j)\ge d_j\).
Nonnegativity of the potentials then gives, for every item,
\[
q_j^B(T)B_j(p_j)+q_j^S(T)L_j(p_j)
\ge\bigl(q_j^B(T)+q_j^S(T)\bigr)d_j^+
\ge d_j.
\]
Let \(\mathcal U_B(T)\) and \(\mathcal U_S(T)\) denote total buyer
utility and seller surplus over \([0,T]\). Since inventory starts empty,
the common-price payments give the pathwise identity
\[
\mathrm{gain}_p(0,T)
=\mathcal U_B(T)+\mathcal U_S(T)-\sum_jp_jN_j(T)
\ge\mathcal U_B(T)+\mathcal U_S(T)-\sum_jp_jC_j.
\]
The finite-horizon Poisson sampling calculation in
Appendix~\ref{app:finite-horizon-multi} gives
\[
\mathbb E[\mathcal U_B(T)+\mathcal U_S(T)]
\ge T\sum_j
\left(q_j^B(T)B_j(p_j)+q_j^S(T)L_j(p_j)\right)
\ge T\sum_jd_j.
\]
By~\eqref{eq:itemwise-lp-master}, \(\sum_jd_j\) is half the flow
objective. This proves the finite-horizon claim. Dividing by \(T\),
taking the liminf, and using the flow upper bound proves the
long-run claims.
\end{proof}

\paragraph*{Uniform tightness.}
Against the unlimited-inventory offline benchmark, the factor \(1/2\) is
optimal as a guarantee uniform over all finite capacity vectors: the model
contains the one-item, unit-capacity case, for which
Theorem~\ref{thm:single-main} rules out any larger ratio even for arbitrary
online policies. This does not rule out stronger capacity-sensitive posted-price guarantees
when every capacity exceeds one.

\subsection{Allocation baseline and benchmark equality}
\label{sec:lp-sampling}

The following policy directly chooses the bundles allocated to buyers.

\begin{proposition}[Unrestricted LP-sampling baseline]
\label{prop:lp-sampling}
Let \(B:=\min_j C_j\). For every feasible solution \((y,z)\) of
\eqref{eq:multi-lp}, there is a randomized online allocation policy
\(\mathcal A\), starting empty and respecting the inventory capacities, such that
\[
G(\mathcal A)\ge
\frac{B}{B+1}
\left(
\sum_{k,S}y_{kS}v_k(S)-\sum_{j,r}z_{jr}a_{jr}
\right).
\]
In particular, an optimal flow solution gives
\(G(\mathcal A)\ge B\LP_m/(B+1)\).
\end{proposition}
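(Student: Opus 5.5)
The plan is to mimic the single-item proof of Theorem~\ref{thm:single-main} item by item. The policy will directly sample the flow, and it will serve each buyer the available part of the sampled bundle, so that every item's inventory is, marginally, a rate-balanced birth-death chain with capacity \(B\). Fix the feasible \((y,z)\) and the supporting clauses \(w^{kS}\) of Section~\ref{sec:decomposition}, and let \(x_j,W_j,A_j\) be as in Section~\ref{sec:constructive-prices}.

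The policy \(\mathcal A\) imposes a virtual capacity \(B\le C_j\) on every item, which respects the true capacities.
\begin{itemize}
\item An arriving seller of type \((j,r)\) with \(\lambda_{jr}>0\) is accepted, if \(N_j<B\), with independent probability \(z_{jr}/\lambda_{jr}\).
\item An arriving buyer of type \(k\) independently draws a bundle \(S\) with probability \(y_{kS}/\gamma_k\), and the empty bundle otherwise. It is then allocated \(S\cap A_t\), removing one unit of each item in \(S\cap A_t\).
\end{itemize}
This is nonanticipatory and starts empty.

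The key structural step is the following claim: for each item \(j\) with \(x_j>0\), the process \(N_j\) is by itself a continuous-time Markov chain on \(\{0,\ldots,B\}\).
\begin{itemize}
\item Upward jumps occur at rate \(\sum_r\lambda_{jr}\cdot z_{jr}/\lambda_{jr}=x_j\) whenever \(N_j<B\).
\item Downward jumps occur at rate \(\sum_k\gamma_k\sum_{S\ni j}y_{kS}/\gamma_k=x_j\) whenever \(N_j>0\). This uses the flow constraint~\eqref{eq:multi-flow}. It also uses the crucial design choice that a buyer takes item \(j\) whenever \(j\in S\) and \(j\) is in stock, regardless of the other items.
\end{itemize}
Thus thinning the independent Poisson streams makes the marginal chain of item \(j\) autonomous, even though the item inventories are correlated. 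Exactly as in the proof of Theorem~\ref{thm:single-main}, detailed balance gives the uniform stationary distribution \(\pi_n=1/(B+1)\), and the chain converges from empty since it is finite and irreducible. Items with \(x_j=0\) are never traded and contribute zero to both sides.

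Next comes the gain accounting, using PASTA (Lemma~\ref{lem:poisson-arrivals}) per item.
\begin{itemize}
\item A buyer drawing \(S\) receives value \(v_k(S\cap A)\ge\sum_{j\in S}w^{kS}_j\one\{j\in A\}\) by the supporting-clause inequality. Since the draw is independent of inventory, PASTA gives an expected buyer value rate of at least \(\sum_j\Pr[N_j>0]\,W_j=\frac{B}{B+1}\sum_jW_j\).
\item Seller cost is incurred by type \((j,r)\) at rate exactly \(z_{jr}\Pr[N_j<B]\), giving total cost rate \(\frac{B}{B+1}\sum_jA_j\).
\end{itemize}
Because every item shares the same access probability \(B/(B+1)\), the factor pulls out uniformly. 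By~\eqref{eq:itemwise-lp-master}, the gain rate is at least \(\frac{B}{B+1}\sum_j(W_j-A_j)\), which is the claimed bound.

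The main obstacle is exactly this uniformity. Individual terms \(W_j-A_j\) may be negative, so we cannot use the true capacities \(C_j\), whose differing probabilities \(C_j/(C_j+1)\) would not factor. Truncating every item to the common capacity \(B\) is what fixes this. A secondary technical point is passing from steady state to the \(\liminf\) definition of \(G\) starting empty. For this I would invoke convergence of each finite marginal chain, so that the time-averaged availability fractions converge to \(B/(B+1)\), or else the finite-horizon PASTA accounting of Appendix~\ref{app:finite-horizon-multi}. Either way, the transient contributes only \(O(1)\) total error, which vanishes after dividing by \(T\).
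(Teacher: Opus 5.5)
Your proposal is correct and follows the paper's proof essentially step for step. Like the paper, you truncate every item to the common capacity \(B\), thin sellers with probability \(z_{jr}/\lambda_{jr}\), draw reference bundles with probability \(y_{kS}/\gamma_k\) and allocate \(S\cap A_t\), note that each item's marginal inventory is a balanced birth-death chain at rate \(x_j\), and combine the supporting-clause bound with the Poisson sampling identity (Lemma~\ref{lem:poisson-compensation}) to extract the common factor \(B/(B+1)\). Your remark that the common truncation is needed because \(W_j-A_j\) can be negative for a non-optimal feasible flow is a useful justification that the paper leaves implicit.
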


\begin{proof}
Use only \(B\) storage slots for each item. Independently retain each
type-\((j,r)\) seller with probability \(z_{jr}/\lambda_{jr}\), and purchase
its unit if retained and \(N_j(t^-)<B\). Independently assign each
type-\(k\) buyer a reference bundle \(S\) with probability
\(y_{kS}/\gamma_k\), using the empty bundle for the remaining probability.
Allocate \(S\cap A_t\) to that buyer. All draws use fresh independent
randomness, without using pre-arrival inventory or previous history.
Zero-rate types have zero corresponding flow and are ignored.
The probabilities are valid by~\eqref{eq:multi-buyer} and
\eqref{eq:multi-seller}. Every allocated unit is present in inventory,
and purchases respect the effective capacity \(B\), so the policy is
temporally feasible and nonanticipatory.

Fix supporting XOS clauses for the positive bundle variables and use
\(x_j,W_j,A_j\) as defined in Section~\ref{sec:constructive-prices}.
For each item with \(x_j>0\), retained sellers form a Poisson stream of rate
\(\sum_r z_{jr}=x_j\), while buyers whose reference bundles contain
\(j\) form an independent Poisson stream of rate
\(\sum_{k,S\ni j}y_{kS}=x_j\).
A retained seller increases inventory whenever it is below \(B\);
a reference request decreases it whenever it is positive. Thus the
marginal inventory process of item \(j\) is a balanced birth-death
chain on \(\{0,\ldots,B\}\), with both feasible transition rates
equal to \(x_j\). Its unique stationary distribution is uniform,
and its distribution converges to that law from empty inventory.
Consequently, both \(\Pr[N_j(t^-)>0]\) and \(\Pr[N_j(t^-)<B]\)
converge to \(f:=B/(B+1)\).
Reference bundles may correlate different inventory coordinates;
the marginal argument does not require their independence.
If \(x_j=0\), the item is never purchased or requested, and
\(W_j=A_j=0\).

Let \(\mathcal V(T)\) and \(\mathcal C(T)\) be the total buyer value and
seller cost generated by the policy through time \(T\).
For a buyer assigned \(S\), its supporting clause gives
\[
v_k(S\cap A_t)\ge
\sum_{j\in S}w_j^{kS}\one\{N_j(t^-)>0\}.
\]
Averaging over the independent draws and applying
Lemma~\ref{lem:poisson-compensation} to the arrival streams yields
\begin{align*}
\mathbb E[\mathcal V(T)]&\ge
\sum_j W_j\int_0^T\Pr[N_j(t^-)>0]\,dt,\\
\mathbb E[\mathcal C(T)]&=
\sum_j A_j\int_0^T\Pr[N_j(t^-)<B]\,dt.
\end{align*}
Divide by \(T\). For every positive-flow item, the two averaged
probabilities converge to \(f\); zero-flow items contribute nothing.
Since \(\mathrm{gain}_{\mathcal A}(0,T)=\mathcal V(T)-\mathcal C(T)\),
taking the liminf and using~\eqref{eq:itemwise-lp-master} proves the claim.
\end{proof}

This policy controls the allocated subset directly. The guarantee does not
establish an implementation through common posted prices under which buyers
choose utility-maximizing available bundles.

\begin{corollary}[Unlimited-inventory benchmark]
\label{cor:unlimited-lp}
The unlimited-inventory online and offline benchmarks satisfy
\[
\OPT_{\mathrm{on},m}^{\infty}
=\OPT_{\mathrm{off},m}^{\infty}
=\LP_m.
\]
\end{corollary}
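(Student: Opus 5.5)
The equality will follow from a chain of inequalities
\[
\LP_m\le\OPT_{\mathrm{on},m}^{\infty}\le\OPT_{\mathrm{off},m}^{\infty}\le\LP_m .
\]
The middle inequality is immediate from the definitions. A randomized nonanticipatory online policy with unlimited storage that allocates only previously acquired units is a particular clairvoyant offline policy obeying the same temporal constraints. Its long-run gain is therefore counted in the supremum defining \(\OPT_{\mathrm{off},m}^{\infty}\).

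The upper bound \(\OPT_{\mathrm{off},m}^{\infty}\le\LP_m\) comes from the flow upper bound stated in Section~\ref{sec:multi-benchmark} and proved in Appendix~\ref{app:flow-upper-bounds}. Every temporally feasible policy starting empty satisfies \(\mathbb E[\mathrm{gain}_{\mathcal A}(0,T)]\le T\cdot\LP_m\) for every \(T>0\). Dividing by \(T\) and taking the liminf gives \(G(\mathcal A)\le\LP_m\). Taking the supremum over clairvoyant unlimited-inventory policies then gives the claim. I am simply invoking that appendix result. Its idea is that the time-averaged expected purchase and allocation rates form a nearly feasible flow, with the deficit in flow conservation at most the terminal inventory divided by \(T\).

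The lower bound \(\LP_m\le\OPT_{\mathrm{on},m}^{\infty}\) uses Proposition~\ref{prop:lp-sampling}. Fix an optimal flow solution and any integer \(B\ge1\). Run the LP-sampling policy with common effective capacity \(B\) for every item. An unlimited-storage online policy may voluntarily refuse purchases once an item holds \(B\) units, so this policy is admissible for \(\OPT_{\mathrm{on},m}^{\infty}\). It starts empty, is nonanticipatory, and allocates only units in stock. The proposition gives
\[
\OPT_{\mathrm{on},m}^{\infty}\ge G(\mathcal A)\ge\frac{B}{B+1}\LP_m .
\]
Letting \(B\to\infty\) gives \(\OPT_{\mathrm{on},m}^{\infty}\ge\LP_m\). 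The sign of \(\LP_m\) does not matter: the zero flow is feasible, so \(\LP_m\ge0\). Combining the three inequalities closes the chain.

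I expect no step to be a real obstacle, because the substantive work sits in Proposition~\ref{prop:lp-sampling} and in the appendix flow bound. The only point needing care is admissibility. A capacity-\(B\) policy has to count as an unlimited-storage policy, which holds because imposing a cap is a voluntary restriction. The policy must also be nonanticipatory and respect the temporal constraint that units are bought before they are allocated, and both were verified in the proof of the proposition.
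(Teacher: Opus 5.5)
Your proof is correct and is essentially the paper's argument: an unlimited-storage online policy may voluntarily run capacity-$B$ LP sampling (Proposition~\ref{prop:lp-sampling}), which gives $\OPT_{\mathrm{on},m}^{\infty}\ge B\LP_m/(B+1)$ for every $B$; online policies are admissible offline; and the appendix flow bound gives $\OPT_{\mathrm{off},m}^{\infty}\le\LP_m$. One correction to your aside on the appendix: with unlimited storage the terminal inventory divided by $T$ need not vanish, so the appendix does not use a nearly feasible flow. It instead matches each allocated unit to a distinct earlier purchase and discards the nonnegative costs of unmatched purchases, which yields an exactly feasible flow.
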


\begin{proof}
For any integer \(B\ge1\), apply Proposition~\ref{prop:lp-sampling}
to the same arrival rates, costs, and valuations, with capacity \(B\)
for every item. The relaxation~\eqref{eq:multi-lp} does not depend on
the capacities, so its value remains \(\LP_m\). An unlimited-inventory
online policy may voluntarily impose these finite capacities. Hence
\[
\OPT_{\mathrm{on},m}^{\infty}\ge\frac{B}{B+1}\LP_m
\qquad\text{for every }B\ge1.
\]
Letting \(B\) grow gives \(\OPT_{\mathrm{on},m}^{\infty}\ge\LP_m\).
Every online policy is also admissible for the offline benchmark, and the
flow upper bound in Appendix~\ref{app:flow-upper-bounds} gives
\(\OPT_{\mathrm{off},m}^{\infty}\le\LP_m\), proving the equality.
\end{proof}

\subsection{Polynomial-time computation}

The flow, supporting clauses, and prices can be computed efficiently under
the following exact oracle assumptions.

\begin{theorem}[Computability]
\label{thm:computability}
Assume rational rates and seller costs and exact demand and XOS oracles for
each buyer type. Both oracles run in polynomial time, with the uniform binary
encoding bounds on data and returned supporting clauses specified in
Appendix~\ref{app:oracle-details}. Let \(d:=|K|+m+\sum_j|R_j|\).
An exact optimal solution of~\eqref{eq:multi-lp} is computable in time
polynomial in \(d\) and these encoding bounds, with at most \(d\) positive
variables among \(y\) and \(z\). In particular, at most \(d\) bundle variables
\(y_{kS}\) are positive. The prices in~\eqref{eq:constructive-price} have the
same computation bound; each buyer's choice uses a demand query.
\end{theorem}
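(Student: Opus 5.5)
The plan is to solve \eqref{eq:multi-lp} through its dual with the ellipsoid method, using the demand oracle as a separation oracle, and then recover a sparse primal solution. The primal has exponentially many variables \(y_{kS}\) but only \(|K|+m+\sum_j|R_j|=d\) nontrivial constraints: the buyer constraints \eqref{eq:multi-buyer}, the flow equalities \eqref{eq:multi-flow}, and the upper bounds in \eqref{eq:multi-seller}. The dual has variables \(\mu_k\ge0\) for the buyers, free variables \(p_j\) for the flow equalities, and \(\eta_{jr}\ge0\) for the seller bounds. It minimizes \(\sum_k\gamma_k\mu_k+\sum_{j,r}\lambda_{jr}\eta_{jr}\) subject to
\[
\mu_k\ge v_k(S)-\sum_{j\in S}p_j\quad\forall k,S,
\qquad
\eta_{jr}\ge p_j-a_{jr}\quad\forall j,r.
\]
The only exponential family is the buyer constraints. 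For fixed \((\mu,p)\), one demand query for type \(k\) at prices \(p\) returns a maximizer \(S\). This either certifies feasibility or exhibits a violated constraint. The demand oracle must handle possibly negative \(p_j\). I would first note that one may restrict to \(p\ge0\): raising a negative \(p_j\) to zero only relaxes the seller constraints, since \(a_{jr}\ge0\), and never hurts the buyer constraints, because free disposal lets a buyer drop items. Alternatively, the oracle specification in Appendix~\ref{app:oracle-details} can cover this case.

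The steps are as follows. First, bound the encoding size of every vertex of the dual polytope. Each constraint has coefficients in \(\{0,\pm1\}\) and right-hand sides \(v_k(S)\) or \(a_{jr}\). By the uniform encoding bound on values, which is implied by the XOS-clause bound, vertices have polynomial size. The ellipsoid method with this separation oracle then computes an exact optimal dual vertex in polynomial time, by the standard Grötschel–Lovász–Schrijver optimization-separation equivalence. Second, the bundles \(S\) returned as violated constraints during the run form a polynomial-size family \(\mathcal F\). The LP restricted to the primal columns in \(\mathcal F\), plus all \(z\) columns, has the same optimum. The reason is that ellipsoid run on the restricted dual makes exactly the same queries and produces the same certificate of optimality. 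Third, solve this polynomial-size restricted primal exactly, for instance by ellipsoid or an exact interior-point method followed by rounding to a vertex. Any basic optimal solution has at most as many positive variables as there are constraints, which is at most \(d\). This gives the sparsity claim for \(y\) and \(z\) together, and hence at most \(d\) positive \(y_{kS}\).

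Finally, for each of the at most \(d\) support pairs \((k,S)\), make one XOS-oracle call to obtain \(w^{kS}\), whose encoding is bounded by assumption. Then compute \(x_j,W_j,A_j\) and \(p_j=(W_j+A_j)/(2x_j)\) by polynomially many rational operations. During operation, each buyer's choice \(D_k(A_t,p)\) is a demand query at prices \(p\) restricted to \(A_t\); this restriction can be implemented by setting unavailable items' prices to a large enough value. I expect the main obstacle to be the exactness bookkeeping in the first two steps, namely making the restricted-LP argument rigorous and verifying that the encoding bounds from the oracles propagate to polynomial vertex sizes. Once those are in place, the remaining steps are routine.
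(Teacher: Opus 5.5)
Your proposal is correct and follows the paper's route. It uses the dual with a demand-plus-XOS separation oracle and Gr\"otschel--Lov\'asz--Schrijver optimization, then takes a basic primal solution supported on the oracle-returned bundles, and finishes with XOS clauses on the at most \(d\) support pairs and demand queries with unavailable items priced out. Your explicit restricted-LP re-solve is the folklore form of the GLS recovery lemma the paper cites, and imposing the equalities~\eqref{eq:multi-flow} there replaces the paper's seller-flow trimming, since the positive-part argument applies equally to the restricted dual. One correction to your justification of \(p\ge0\): raising a negative \(p_j\) to zero relaxes the buyer constraints, so no free disposal is needed. It tightens the seller constraints, which stay satisfied only because \(\eta_{jr}\ge0\ge-a_{jr}\).
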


\begin{proof}[Proof sketch]
The dual of~\eqref{eq:multi-lp} has variables \(\alpha_k\ge0\),
\(\mu_j\ge0\), and \(\eta_{jr}\ge0\):
\begin{align}
\min\quad&
\sum_k\gamma_k\alpha_k+\sum_{j,r}\lambda_{jr}\eta_{jr}
\label{eq:multi-dual-objective}\\
\text{subject to}\quad&
\alpha_k+\sum_{j\in S}\mu_j\ge v_k(S)
&&\forall k,\ S\subseteq M,
\label{eq:multi-dual-buyer}\\
&\eta_{jr}\ge\mu_j-a_{jr}
&&\forall j,r.
\label{eq:multi-dual-seller}
\end{align}
The item-flow equalities initially give free variables \(\mu_j\), but they may
be replaced by their positive parts without increasing the objective or
violating any constraint because valuations, seller costs, and \(\eta\) are
nonnegative.

For a buyer type \(k\), a demand query at prices \(\mu\) returns a bundle
maximizing \(v_k(S)-\sum_{j\in S}\mu_j\). An XOS query evaluates the returned
bundle and supplies a supporting clause. This either identifies a violated
constraint~\eqref{eq:multi-dual-buyer} or certifies all buyer constraints for
that type. The explicit constraints are checked directly, giving a strong separation
oracle. Standard optimization--separation and primal-recovery results of
Gr\"otschel, Lov\'asz, and Schrijver~\cite[Theorem~6.4.9 and
Lemma~6.5.15]{GLS88} yield an optimal primal solution with polynomially many
nonzero variables. A harmless itemwise trimming of seller flow restores the equalities
in~\eqref{eq:multi-flow} after the restriction \(\mu\ge0\).
Appendix~\ref{app:oracle-details} gives the exact oracle-model,
primal-recovery, support, and encoding details.

For every positive primal variable \(y_{kS}\), query the XOS oracle on \(S\)
to obtain \(w^{kS}\). The sums \(x_j,W_j,A_j\) and the prices
in~\eqref{eq:constructive-price} are then computed directly. Demand queries
also implement each arriving buyer's choice under these prices.
\end{proof}

\begin{remark}[Approximate optimization]
If a flow is feasible for the true arrival rates and its objective under the
true valuations and costs is at least \(\LP_m-\varepsilon\), the price
construction using exact supporting clauses gives
\[
G(p)\ge\frac12(\LP_m-\varepsilon).
\]
Establishing these conditions from numerical approximations or samples, or
replacing the exact oracles, requires a separate argument.
\end{remark}

\section{Balancing principles: comparison and open directions}
\label{sec:comparison}

The unlimited-storage online and clairvoyant values both equal \(\LP_m\)
(Corollary~\ref{cor:unlimited-lp}). Our guarantees address finite storage and
common-price implementation while buyers retain their choice of bundles.

For one item of unit capacity, contribution balance weakly dominates rate
balance: an optimal-flow contribution price gives at least \(\LP_1/2\),
which is the exact rate-balanced gain. Across capacities, neither rule
dominates. The following instances have free sellers; buyer rates follow
the listed value order. Their achieved gains give a strict advantage to each
rule (Appendix~\ref{app:rule-comparison}).
\begin{center}
\begin{tabular}{cccccc}
\(C\) & Seller rate & Buyer values & Buyer rates & Rate balance & Contribution balance\\
\hline
\(1\) & \(2\) & \(1\) & \(1\) & \(1/2\) & \(2/3\)\\
\(2\) & \(1\) & \(8,15\) & \(1,1\) & \(10\) & \(69/7\)
\end{tabular}
\end{center}
These gains need not be optimal for the instance.

\paragraph*{Offline with the same inventory capacity.}
Let \(\OPT_{\mathrm{off}}^C\) denote the clairvoyant offline benchmark with
capacity \(C\), starting empty and purchasing each unit before allocating it.
Since \(\OPT_{\mathrm{off}}^C\le\OPT_{\mathrm{off}}^\infty\),
Theorem~\ref{thm:single-main} also gives
\(G_C\ge\frac{C}{C+1}\OPT_{\mathrm{off}}^C\). The multi-item \(1/2\)
guarantees transfer similarly to same-capacity offline.

The unlimited-storage tightness result does not establish an impossibility
for all online policies against same-capacity offline. In the free-seller,
unit-value-buyer example, accepting every feasible trade maximizes sales for
both. With buyer rate one and increasing seller rate, their gains tend to
one, whereas the rate-balanced rule keeps gain \(C/(C+1)\). Its factor thus
remains tight against same-capacity offline, but these examples do not
determine the best guarantee of other online policies against this benchmark.

\paragraph*{Open directions.}
First, finding a polynomial-time construction of the multi-item rate-balanced
prices and their fixed randomized tie-breaking rule remains open
(Theorem~\ref{thm:multi-rate-balance}).

Second, contribution balance applies to arbitrary finite capacities, but its
certificate uses only the statewise fact that every item is either nonempty or
nonfull, and remains at \(1/2\). The single-item theorem
shows that capacity can improve the factor to \(C/(C+1)\).
The unrestricted LP-sampling baseline guarantees
\(C_{\min}/(C_{\min}+1)\) of \(\LP_m\)
(Section~\ref{sec:lp-sampling}).
Obtaining a comparable capacity-sensitive guarantee through static common
posted prices, with XOS buyers choosing utility-maximizing available bundles,
remains open.

Third, a sample-based extension of Feldman, Gravin, and Lucier~\cite{FGL15}
must learn buyer contributions and seller-cost flows while ensuring
feasibility and objective accuracy for the true instance.

Finally, trade is individually rational: sellers receive at least their costs,
and buyers can choose the empty bundle, so seller surplus and buyer utility
are nonnegative. With nonnegative common prices and empty initial inventory,
working capital of \(\sum_jp_jC_j\) funds every transfer
(Appendix~\ref{app:finite-horizon-multi}). Strategic reporting and profit
objectives require further study, as does maximizing gain under a given
capital budget. Profit may require different purchase and sale prices.

\section*{Declaration of AI Assistance}

Generative AI tools provided by OpenAI assisted with literature searches,
drafting the related-work discussion, and revising the exposition. They also
assisted with developing, formalizing, and checking mathematical arguments.
The author reviewed and verified all AI-assisted material and cited sources
and takes responsibility for the final manuscript, including the correctness
of its theorems, algorithms, and proofs.

\clearpage
\bibliographystyle{plainurl}
\bibliography{balanced-posted-prices-arxiv}

\clearpage
\appendix

\section{Finite-horizon arguments}
\label{app:flow-upper-bounds}

Sections~A.1 and~A.2 prove the two flow upper bounds without assuming the
existence of componentwise long-run benchmark flows. Section~A.3 gives the
formal Poisson sampling identity and removes the steady-state assumption used
to present the main multi-item proof.

\subsection{Single-item type-pair flows}

Fix a horizon \(T>0\), a realization of the arrival processes, and an online
or clairvoyant policy starting with empty inventory. Pair every sale by time
\(T\) with one distinct earlier unmatched purchase. Let \(X_{ij}(T)\) be the
number of pairs in which the unit was acquired from seller type \(i\) and
allocated to buyer type \(j\). Pathwise,
\[
\sum_jX_{ij}(T)\le A_i^S(T),
\qquad
\sum_iX_{ij}(T)\le A_j^B(T),
\]
where \(A_i^S(T)\) and \(A_j^B(T)\) are the corresponding seller and buyer
arrival counts.

All seller costs are nonnegative. Deleting the costs of purchases left unsold
at time \(T\), and replacing the gain of every completed pair by its positive
part, can only increase gain. Hence
\[
\mathrm{gain}_{\mathcal A}(0,T)
\le \sum_{i,j}(b_j-a_i)^+X_{ij}(T).
\]
Define \(x_{ij}^T:=\mathbb E[X_{ij}(T)]/T\). Since
\(\mathbb E[A_i^S(T)]=\lambda_iT\) and
\(\mathbb E[A_j^B(T)]=\gamma_jT\), the vector \(x^T\) is feasible
for~\eqref{eq:single-lp}. Therefore
\[
\frac1T\mathbb E[\mathrm{gain}_{\mathcal A}(0,T)]
\le \sum_{i,j}(b_j-a_i)^+x_{ij}^T
\le \LP_1.
\]
Taking the liminf and then the supremum over clairvoyant policies proves
\(\OPT_{\mathrm{off}}^\infty\le\LP_1\).

\subsection{Multi-item bundle and acquisition flows}

Fix a horizon \(T>0\), a realization of the arrival processes, and an online
or clairvoyant policy starting with empty inventory. Let \(Y_{kS}(T)\) be the
number of type-\(k\) buyers allocated bundle \(S\) by time \(T\). For every
allocated unit of item \(j\), match it to one distinct earlier purchase of that
item, and let \(\widehat P_{jr}(T)\) count the matched purchases from seller
type \(r\). Then, pathwise,
\[
\sum_k\sum_{S\ni j}Y_{kS}(T)
=\sum_{r\in R_j}\widehat P_{jr}(T),
\qquad
0\le \widehat P_{jr}(T)\le A^S_{jr}(T),
\]
where \(A^S_{jr}(T)\) is the number of type-\((j,r)\) seller arrivals.
Deleting the costs of all unmatched purchases can only increase gain, so
\[
\mathrm{gain}_{\mathcal A}(0,T)
\le
\sum_{k,S}Y_{kS}(T)v_k(S)
-\sum_{j,r}\widehat P_{jr}(T)a_{jr}.
\]

Set
\[
y_{kS}^T:=\frac{\mathbb E[Y_{kS}(T)]}{T},
\qquad
z_{jr}^T:=\frac{\mathbb E[\widehat P_{jr}(T)]}{T}.
\]
Each type-\(k\) arrival receives at most one bundle, each matched purchase is
a seller arrival of its type, and the displayed pathwise equality gives item
flow balance. Hence \((y^T,z^T)\) is feasible for~\eqref{eq:multi-lp}.
Consequently,
\[
\frac1T\mathbb E[\mathrm{gain}_{\mathcal A}(0,T)]\le\LP_m.
\]
Taking the liminf and then the supremum over clairvoyant policies proves
\(\OPT_{\mathrm{off},m}^\infty\le\LP_m\).

\subsection{Poisson sampling and arbitrary tie-breaking}
\label{app:finite-horizon-multi}

The main text uses steady-state probabilities to display the argument.
For a fixed tie-breaking rule depending only on the current inventory and
fresh independent randomness, the inventory is a finite-state continuous-time
Markov chain and has a limiting distribution. An arbitrary nonanticipatory
rule may depend on time or the full history, so such a distribution need not
exist. The finite-horizon identity below covers both cases.

\begin{samepage}
\begin{lemma}[Finite-horizon Poisson sampling]
\label{lem:poisson-compensation}
Let \(Z(t)\) count a Poisson arrival stream of rate \(r>0\), with arrival
times \(t_1,t_2,\ldots\). Suppose its future increments are independent of
the history available so far, including all past arrivals and independent
policy randomization. For every bounded predictable process \(H(t)\),
determined by that history immediately before time \(t\), and every \(T>0\),
\[
\mathbb E\left[\sum_{\ell:t_\ell\le T}H(t_\ell)\right]
=r\,\mathbb E\left[\int_0^T H(t)\,dt\right].
\]
In particular, if \(\mathbb E[H(t)]\to h\), then
\[
\lim_{T\to\infty}\frac1T
\mathbb E\left[\sum_{\ell:t_\ell\le T}H(t_\ell)\right]=rh.
\]
\end{lemma}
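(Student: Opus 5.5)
We prove the identity with the standard compensator argument: the compensated Poisson process \(M(t):=Z(t)-rt\) is a martingale with respect to the filtration \(\mathcal F_t\) generated by the history (past arrivals of all streams and the policy randomization). This holds because future increments of \(Z\) are independent of \(\mathcal F_t\) and Poisson with mean \(r\) times the interval length. The sum on the left is the stochastic integral \(\int_0^T H(t)\,dZ(t)\), so the claim is equivalent to
\[
\mathbb E\left[\int_0^T H(t)\,dM(t)\right]=0 .
\]
This is the statement that integrals of bounded predictable integrands against a martingale of integrable variation have mean zero.

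To keep the argument self-contained, we first treat simple predictable integrands. Let \(H(t)=\sum_{m}\xi_m\one\{s_m<t\le s_{m+1}\}\), where \(0=s_0<s_1<\cdots<s_n=T\) and each \(\xi_m\) is bounded and \(\mathcal F_{s_m}\)-measurable. For each piece,
\[
\mathbb E\bigl[\xi_m(Z(s_{m+1})-Z(s_m))\bigr]
=\mathbb E[\xi_m]\,r(s_{m+1}-s_m),
\]
by independence of the increment from \(\mathcal F_{s_m}\). Summing over \(m\) gives the identity. The class of bounded predictable \(H\) satisfying the identity is closed under bounded monotone limits, by dominated convergence on both sides: the left side is dominated by \(\|H\|_\infty Z(T)\), which is integrable with mean \(rT\). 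A monotone class argument then extends the identity to all bounded predictable processes, since the predictable \(\sigma\)-field is generated by the simple left-continuous sets.

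The main obstacle is exactly this extension step: the identity must be justified for \(H\) evaluated at the jump time itself, using left limits such as \(\one\{N_j(t^-)>0\}\). This is why predictability, not mere adaptedness, is required. Taking \(H(t)=\one\{\text{a jump of }Z\text{ occurs in }[t-\delta,t)\}\) shows the identity fails for non-predictable choices. For the processes used in the paper, an alternative route avoids monotone classes. These processes are left-continuous with finitely many jumps on \([0,T]\) almost surely, so we approximate \(H\) pointwise by the simple processes \(H_n(t):=H(\lfloor 2^n t\rfloor/2^n)\) evaluated at dyadic left endpoints. Then \(H_n(t_\ell)\to H(t_\ell)\) at every arrival time by left-continuity, and bounded convergence applies on both sides.

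For the second statement, divide the identity by \(T\). By Fubini, \(\frac1T\mathbb E\int_0^T H\,dt=\frac1T\int_0^T\mathbb E[H(t)]\,dt\). A Cesàro average of a bounded function converging to \(h\) converges to \(h\), so the limit equals \(rh\).
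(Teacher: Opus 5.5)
Your proof is correct and follows the paper's route: independent increments on simple predictable integrands \(Y\one\{s<t\le u\}\), extension to bounded predictable \(H\) (which you make explicit by a monotone class argument or dyadic left-endpoint approximation), and a Ces\`aro average for the limit. One side remark is wrong, though nothing depends on it: \(\one\{\text{a jump of }Z\text{ occurs in }[t-\delta,t)\}\) is the indicator of \(\bigcup_\ell(t_\ell,t_\ell+\delta]\), which is left-continuous, adapted, and therefore predictable, so the identity holds for it; a genuine non-predictable counterexample is \(H(t)=Z(t)-Z(t^-)\), for which the left side equals \(rT\) but the right side equals \(0\).
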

\end{samepage}

\begin{proof}
For \(H(t)=Y\one\{s<t\le u\}\), with bounded \(Y\) determined by
the history at time \(s\), independence of future increments gives
\(\mathbb E[Y(Z(u)-Z(s))]=r(u-s)\mathbb E[Y]\).
Summing proves the identity for simple predictable processes; the standard
bounded predictable approximation gives the general identity. The last
assertion follows by dividing by \(T\) and averaging \(\mathbb E[H(t)]\).
\end{proof}

Taking \(H(t)=\one\{N(t^-)>0\}\) gives the expected number of arrivals
finding stock as \(r\) times the expected time with stock available. Use
\(H(t)=\one\{N(t^-)<C\}\) for free capacity. Inventory may depend
on the sampled stream's past arrivals; only anticipation of future arrivals
is excluded. Zero-rate streams contribute zero. For the balanced single-item
chain, irreducibility implies convergence to its uniform stationary law from
empty inventory. The last assertion of Lemma~\ref{lem:poisson-compensation}
therefore justifies the long-run trade rates used in the single-item proof.

We now prove Theorem~\ref{thm:constructive-multi} without assuming that the
inventory process has a limiting distribution. Fix a nonnegative price vector,
an arbitrary admissible nonanticipatory tie-breaking rule, and a horizon
\(T>0\). Define
\begin{align*}
q_j^B(T)
&:=\frac1T\mathbb E\left[
\int_0^T\one\{N_j(t^-)>0\}\,dt
\right],\\
q_j^S(T)
&:=\frac1T\mathbb E\left[
\int_0^T\one\{N_j(t^-)<C_j\}\,dt
\right].
\end{align*}
The statewise inequality
\[
\one\{N_j>0\}+\one\{N_j<C_j\}\ge1
\]
implies
\begin{equation}
q_j^B(T)+q_j^S(T)\ge1.
\label{eq:access-sum-T}
\end{equation}

Let \(\mathcal U_B(T)\) and \(\mathcal U_S(T)\) be total buyer utility and
seller surplus by time \(T\). Applying the same reference-bundle comparison as
in Lemma~\ref{lem:stationary-decomposition}, followed by the finite-horizon
identity in Lemma~\ref{lem:poisson-compensation} for each buyer stream and
\(H(t)=\one\{N_j(t^-)>0\}\), gives
\[
\mathbb E[\mathcal U_B(T)]
\ge T\sum_jq_j^B(T)B_j(p_j).
\]
Similarly, apply the same identity to each seller stream with
\(H(t)=\one\{N_j(t^-)<C_j\}\). Strict low-cost sellers trade whenever
the relevant inventory is nonfull, while sellers at equality contribute zero
surplus, so
\[
\mathbb E[\mathcal U_S(T)]
=T\sum_jq_j^S(T)L_j(p_j).
\]
Here \(q_j^B(T)\) and \(q_j^S(T)\) are expected occupation-time fractions;
these identities do not require either fraction to converge.

Let \(P_j(T)\) and \(Q_j(T)\) denote the numbers of purchased and allocated
units of item \(j\). Since inventory starts empty,
\[
P_j(T)-Q_j(T)=N_j(T).
\]
The common-price payments therefore satisfy the pathwise identity
\[
\mathrm{gain}_p(0,T)
=\mathcal U_B(T)+\mathcal U_S(T)-\sum_jp_jN_j(T)
\ge
\mathcal U_B(T)+\mathcal U_S(T)-\sum_jp_jC_j.
\]
Combining the preceding displays yields
\begin{equation}
\frac1T\mathbb E[\mathrm{gain}_p(0,T)]
\ge
\sum_j\left(
q_j^B(T)B_j(p_j)+q_j^S(T)L_j(p_j)
\right)
-\frac1T\sum_jp_jC_j.
\label{eq:finite-decomposition}
\end{equation}

For the contribution-balanced prices, let \(d_j=(W_j-A_j)/2\). The itemwise
bounds from the main proof give
\[
B_j(p_j)\ge d_j^+,
\qquad
L_j(p_j)\ge d_j^+.
\]
Using~\eqref{eq:access-sum-T},
\[
q_j^B(T)B_j(p_j)+q_j^S(T)L_j(p_j)
\ge d_j.
\]
Summing over items in~\eqref{eq:finite-decomposition} and using
\eqref{eq:itemwise-lp-master} gives
\[
\mathbb E[\mathrm{gain}_p(0,T)]
\ge
\frac{T}{2}
\left(
\sum_{k,S}y_{kS}v_k(S)-\sum_{j,r}z_{jr}a_{jr}
\right)
-\sum_jp_jC_j.
\]
Dividing by \(T\) and taking the liminf proves
Theorem~\ref{thm:constructive-multi}.

\paragraph*{Working capital.}
For fixed common prices \(p_j\ge0\) and empty initial inventory, cumulative
payments to sellers minus receipts from buyers satisfy, at every time \(t\),
\[
\sum_jp_j\bigl(P_j(t)-Q_j(t)\bigr)
=\sum_jp_jN_j(t)
\le\sum_jp_jC_j.
\]
Thus initial cash of \(\sum_jp_jC_j\) suffices to finance all transfers.
Before a feasible purchase of item \(j\), its free slot leaves at least
\(p_j\) of this cash available. The same inventory accounting yields the
terminal-inventory correction in~\eqref{eq:finite-decomposition}.

\section{Proof of multi-item rate-balance existence}
\label{app:multi-rate-balance}

We prove Theorem~\ref{thm:multi-rate-balance}. The auxiliary random choices
and inventory noise below are used only to obtain a continuous map for a
fixed-point argument. The final policy uses the original arrival streams,
exact utility-maximizing choices, and randomization only at indifference.

\begin{proof}[Proof of Theorem~\ref{thm:multi-rate-balance}]
Delete types with arrival rate zero. For every item, define
\[
\underline a_j:=\min_{r\in R_j}a_{jr},
\qquad
\overline v_j:=\max_{k\in K}v_k(\{j\}),
\qquad
\mathcal J:=\{j\in M:\underline a_j<\overline v_j\}.
\]
Positive total supply for each item and positive total buyer rate make these
minima and maxima well-defined.

\paragraph*{Restricting an optimal flow.}
XOS valuations satisfy the marginal bound
\[
v_k(S)-v_k(S\setminus\{j\})\le v_k(\{j\})
\le\overline v_j.
\]
To see the first inequality, take a supporting clause for \(S\): its value
on \(S\setminus\{j\}\) is at most \(v_k(S\setminus\{j\})\), and its
weight on \(j\) is at most \(v_k(\{j\})\).
If \(j\notin\mathcal J\), remove \(j\) from every bundle in a feasible
flow solution, merging variables for identical resulting bundles, and set
all its purchase variables to zero. Buyer-rate constraints and every other
item's flow remain unchanged. The lost buyer value is at most
\(x_j\overline v_j\), whereas the saved seller cost is at least
\(x_j\underline a_j\). Applying this operation successively to an optimal
solution preserves optimality and leaves positive flow only on
\(\mathcal J\).

For items outside \(\mathcal J\), set the price to zero and reject sellers
at equality. These inventories remain empty. If \(\mathcal J=\varnothing\),
the restricted optimal solution has value zero and this policy proves the
theorem. Otherwise, work on the finite active inventory state space
\[
\mathcal N:=\prod_{j\in\mathcal J}\{0,1,\ldots,C_j\},
\qquad A(n):=\{j\in\mathcal J:n_j>0\}.
\]

\paragraph*{A continuous auxiliary inventory process.}
Choose \(H\) greater by at least one than every active seller cost and
every \(\overline v_j\), and consider prices in \([0,H]^{\mathcal J}\).
For \(\varepsilon>0\), a seller of type \((j,r)\) is accepted, whenever
capacity is available, with probability
\[
\alpha_{jr}^{\varepsilon}(p_j)
:=\min\left\{1,\max\left\{0,\frac{p_j-a_{jr}}{\varepsilon}\right\}\right\}.
\]
Given available set \(A\), a buyer of type \(k\) chooses \(D\subseteq A\)
with probability
\[
\beta_{k,A,D}^{\varepsilon}(p)
:=\frac{\exp\bigl((v_k(D)-\sum_{j\in D}p_j)/\varepsilon\bigr)}
{\displaystyle\sum_{E\subseteq A}
\exp\bigl((v_k(E)-\sum_{j\in E}p_j)/\varepsilon\bigr)}.
\]
In addition, give each active inventory coordinate independent upward and
downward moves at rate \(\varepsilon\), whenever the corresponding move is
feasible. These extra moves are fictitious and will vanish in the limit.
They make the auxiliary continuous-time Markov chain irreducible. Its
transition rates are continuous in \(p\), so its unique stationary
distribution \(\pi_p^{\varepsilon}\) is continuous in \(p\). This follows
also by solving its stationary linear equations with normalization: the
unique solution varies continuously with the matrix entries.

Write
\[
u_j^{\varepsilon}(p):=
\sum_r\lambda_{jr}\alpha_{jr}^{\varepsilon}(p_j),
\qquad
d_j^{\varepsilon}(n,p):=
\sum_k\gamma_k\sum_{\substack{D\subseteq A(n)\\j\in D}}
\beta_{k,A(n),D}^{\varepsilon}(p).
\]
These are the potential purchase intensity before checking capacity and the
buyer consumption intensity in state \(n\), excluding fictitious moves. Let
\(\pi_j^{\varepsilon}(\ell):=\Pr_{\pi_p^{\varepsilon}}[N_j=\ell]\).
For \(0\le\ell<C_j\), equality of stationary rates crossing the inventory
level between \(\ell\) and \(\ell+1\) gives
\begin{equation}
(u_j^{\varepsilon}(p)+\varepsilon)\pi_j^{\varepsilon}(\ell)
=\mathbb E_{\pi_p^{\varepsilon}}\left[
\bigl(d_j^{\varepsilon}(N,p)+\varepsilon\bigr)
\one\{N_j=\ell+1\}\right].
\label{eq:rate-balance-crossing}
\end{equation}
This uses only that each purchase or consumption changes \(N_j\) by one;
the other inventory coordinates may be coupled.

\paragraph*{Opposite signs on the price faces.}
Recall \(\Gamma=\sum_k\gamma_k>0\) and
\(\Lambda_j=\sum_r\lambda_{jr}>0\). Choose \(\varepsilon\) sufficiently
small that
\[
0<\varepsilon<1,
\qquad \frac{2\varepsilon}{\Gamma}<1,
\qquad
\Gamma e^{-1/\varepsilon}+\varepsilon
<\min_{j\in\mathcal J}\Lambda_j.
\]
If \(p_j=0\), then \(u_j^{\varepsilon}=0\). Whenever \(j\) is available,
pair each bundle \(D\) excluding \(j\) with \(D\cup\{j\}\). Monotonicity
and the zero price make the latter's utility at least the former's, so its
exponential weight is at least as large. Every buyer therefore includes
\(j\) with probability at least \(1/2\), giving
\(d_j^{\varepsilon}(n,p)\ge\Gamma/2\) whenever \(n_j>0\).
Equation~\eqref{eq:rate-balance-crossing} implies
\[
\pi_j^{\varepsilon}(\ell+1)
\le\frac{2\varepsilon}{\Gamma}\pi_j^{\varepsilon}(\ell).
\]
Thus \(\pi_j^{\varepsilon}(C_j)<\pi_j^{\varepsilon}(0)\).

If \(p_j=H\), every seller of item \(j\) is accepted, so
\(u_j^{\varepsilon}=\Lambda_j\). For
\(D\subseteq A(n)\setminus\{j\}\), the marginal bound above gives
\[
v_k(D\cup\{j\})-\sum_{h\in D\cup\{j\}}p_h
\le v_k(D)-\sum_{h\in D}p_h+\overline v_j-H.
\]
The paired exponential weights now show that the inclusion probability is
at most \(e^{-(H-\overline v_j)/\varepsilon}\allowbreak\le e^{-1/\varepsilon}\).
Using~\eqref{eq:rate-balance-crossing},
\[
\pi_j^{\varepsilon}(\ell)
\le\frac{\Gamma e^{-1/\varepsilon}+\varepsilon}{\Lambda_j}
\pi_j^{\varepsilon}(\ell+1).
\]
Hence \(\pi_j^{\varepsilon}(C_j)>\pi_j^{\varepsilon}(0)\). Both face
inequalities are strict because irreducibility makes every inventory-level
probability positive. They hold uniformly over all the other prices.

\paragraph*{The fixed point and its limit.}
Define the continuous imbalance
\[
F_j^{\varepsilon}(p):=
\Pr_{\pi_p^{\varepsilon}}[N_j=C_j]
-\Pr_{\pi_p^{\varepsilon}}[N_j=0]
\]
and the continuous map of the price cube to itself
\[
T_j^{\varepsilon}(p):=
\min\{H,\max\{0,p_j-F_j^{\varepsilon}(p)\}\}.
\]
Brouwer's fixed-point theorem~\cite[Theorem~2.1]{AGR18} supplies a fixed
point \(p^{\varepsilon}\). The face signs exclude a fixed point with any
coordinate equal to zero or \(H\). At an interior fixed point the displayed
equation forces \(F_j^{\varepsilon}(p^{\varepsilon})=0\) for every active
item.

Take a sequence \(\varepsilon\downarrow0\). Compactness allows a common
subsequence along which the prices, all seller acceptance probabilities,
all buyer-choice probabilities for every type and available set, and the
stationary distributions converge. Denote their limits by \(p^*\),
\(\alpha_{jr}^*\), \(\beta_{k,A,D}^*\), and \(\pi^*\).
If \(p_j^*>a_{jr}\), the seller limit is one; if \(p_j^*<a_{jr}\), it is
zero. Intermediate acceptance probabilities can occur only at equality.
Likewise, any bundle with a strict utility gap from the maximum at \(p^*\)
has an exponentially vanishing choice probability. Thus \(\beta^*\) is
supported only on exact utility maximizers. These limits define a fixed
stationary randomized policy: seller coins depend only on seller type,
and buyer probabilities only on buyer type and the available set.

Let \(Q^*\) be the limiting policy's transition-rate matrix (generator),
with diagonal entries making each row sum to zero. The auxiliary
transition-rate matrices converge to \(Q^*\) as the fictitious moves vanish.
Taking limits in the stationary
equations gives \(\pi^*Q^*=0\), and endpoint balance gives
\begin{equation}
\Pr_{\pi^*}[N_j=0]=\Pr_{\pi^*}[N_j=C_j]
\quad\text{for all }j\in\mathcal J.
\label{eq:multi-state-endpoints}
\end{equation}

\paragraph*{Convergence from empty inventory.}
It remains to show that this stationary distribution is the one reached
from the required initial state. A \emph{closed recurrent class} is a set
of inventory states that the process cannot leave and within which every
state can be reached from every other. The potential purchase intensity under the
limiting policy is the state-independent number
\[
u_j^*:=\sum_r\lambda_{jr}\alpha_{jr}^*.
\]
Suppose \(u_j^*=0\) for some \(j\in\mathcal J\). Strictly acceptable
sellers would give positive intensity, so \(p_j^*\le\underline a_j\).
Choose a positive-rate buyer type \(k\) with
\(v_k(\{j\})=\overline v_j>\underline a_j\). While \(j\) is stocked,
this buyer has a bundle with strictly positive utility and hence must
choose a nonempty bundle. A sequence of at most
\(\sum_{h\in\mathcal J}C_h\) arrivals of this buyer without seller
arrivals must consume \(j\): as long as it remains stocked, each arrival
decreases total inventory by at least one. Such a sequence has positive
probability from every state with \(j\) stocked. Since \(j\) can never
be replenished, no closed recurrent class can retain it. Every stationary
distribution would therefore have \(N_j=0\) almost surely, contradicting
\eqref{eq:multi-state-endpoints}. Thus \(u_j^*>0\) for every active item.

From any active inventory state, a sequence of accepted seller arrivals
and no buyers can now reach the full active inventory vector. It has
positive probability because every \(u_j^*>0\). The finite chain therefore
has exactly one closed recurrent class, containing that full vector. Its
unique stationary distribution is \(\pi^*\), and the distribution starting
empty converges to it. Items outside \(\mathcal J\) remain empty under
the previously specified zero-price rule.

\paragraph*{The welfare guarantee.}
The events \(N_j=0\) and \(N_j=C_j\) are disjoint. Their common probability
in~\eqref{eq:multi-state-endpoints} is at most \(1/2\), so
\[
q_j^B=q_j^S\ge\frac12\qquad(j\in\mathcal J).
\]
The optimal flow chosen at the start has positive flow only on
\(\mathcal J\). Apply Corollary~\ref{cor:half-availability} to this flow
and the limiting policy. Convergence from empty also justifies its
long-run gain: in~\eqref{eq:finite-decomposition}, the expected occupation
fractions converge to \(q_j^B,q_j^S\), and the bounded terminal-inventory
term vanishes. The same itemwise bounds used in the corollary therefore give
\(G(p^*)\ge\LP_m/2\). The offline upper bound finishes the proof.
\end{proof}

\section{Exact oracle model and primal recovery}
\label{app:oracle-details}

We use the standard Turing model for rational data. Arrival rates and seller
costs are rational numbers of polynomial encoding length. Exact demand and
XOS oracles run in time polynomial in the instance and query encoding lengths.
A demand query at rational nonnegative prices returns a utility-maximizing
bundle. An XOS query returns a supporting additive clause with
rational weights whose encoding length is polynomial in the instance length,
uniformly over all queried bundles. Types of rate zero may be deleted before
running the algorithm; this does not change the LP or the posted-price policy
and can only decrease the support bound.
A demand query restricted to the available set \(A\) can be simulated by
assigning every unavailable item price \(v_k(M)+1\), where \(v_k(M)\) is
obtained from an XOS query on \(M\).

The dual feasible region in~\eqref{eq:multi-dual-buyer} and
\eqref{eq:multi-dual-seller} has dimension
\[
d=|K|+m+\sum_j|R_j|.
\]
Its explicit nonnegativity and seller inequalities have polynomial encoding
length. For a proposed rational point \((\alpha,\mu,\eta)\), these inequalities
are checked directly. For each buyer type, the demand and XOS queries described
in the proof of Theorem~\ref{thm:computability} either return a genuinely
violated buyer inequality or certify all buyer inequalities of that type. The
returned inequality has coefficients in \(\{0,1\}\), and its right-hand side is
the rational value of the returned supporting clause. Thus the dual is a
well-described rational polyhedron with an exact strong separation oracle whose
outputs have polynomial encoding length.

The zero solution is feasible for the primal, and the rate constraints bound
all primal variables, so the primal has a finite attained optimum. Linear
programming duality gives the same conclusion for the dual. We apply the strong optimization theorem and oracle-inequality dual-recovery
lemma of Gr\"otschel, Lov\'asz, and
Schrijver~\cite[Theorem~6.4.9 and Lemma~6.5.15]{GLS88} directly to this dual
polyhedron, without intersecting it with an artificial bounding box. Hence the
recovered optimal multipliers are supported only on genuine nonnegativity, seller,
and buyer inequalities. In particular, every positive
multiplier on a buyer inequality identifies an actual bundle \(S\) returned by
the separation oracle.

It remains only to account for the restriction \(\mu\ge0\). The unrestricted
flow-dual variables may be replaced by their positive parts without increasing
the dual objective or violating any constraint, as observed in the main proof.
The dual of the formulation with \(\mu\ge0\) has the relaxed item-flow
constraints
\[
\sum_k\sum_{S\ni j}y_{kS}\le\sum_r z_{jr}\qquad\forall j.
\]
Given the recovered primal solution, reduce the \(z_{jr}\) values itemwise until equality
holds. Such a reduction is always possible, preserves all seller upper bounds,
and cannot decrease the objective because \(a_{jr}\ge0\). The resulting
solution is feasible for~\eqref{eq:multi-lp} and therefore has value exactly
\(\LP_m\). The trimming introduces no positive variable.

Before trimming, the recovered multiplier solution is basic, so at most
\(d\) dual inequalities have positive multipliers. These multipliers
include the variables \(y,z\) and the primal slacks; consequently at most \(d\)
of the recovered \(y,z\) variables are positive. XOS queries on the positive
\(y_{kS}\) variables return all clauses needed to compute \(W_j\), while
\(x_j,A_j\), and the rational prices in~\eqref{eq:constructive-price} are then
obtained by polynomially many exact arithmetic operations.

\section{Extended related work}
\label{app:extended-related}

\paragraph*{Classical prophet inequalities and posted prices.}
The classical prophet-inequality literature compares an online stopping rule
with a prophet who sees all realizations~\cite{KS77,KS78}. Samuel-Cahn's
median rule~\cite{SC84} and the rank-one specialization of the balanced-price
argument of Kleinberg and Weinberg~\cite{KW12} provide the two proof templates
used in this paper. Later work developed prophet inequalities and posted-price
mechanisms under richer feasibility constraints; representative examples
include sequential posted pricing~\cite{CHMS10}, online contention
resolution~\cite{FSZ16}, and the pricing framework of D\"utting et
al.~\cite{DFKL20}; Correa et al.~\cite{CFHOV19} survey this literature.
For combinatorial auctions with XOS buyers, Feldman, Gravin, and
Lucier~\cite{FGL15} obtain static item prices from supporting additive clauses.
Braun and Kesselheim~\cite{BK23SOSA} give a simplified prophet-inequality
treatment of combinatorial auctions, while Banihashem et
al.~\cite{BHHKO24} study the power and limitations of posted prices in prophet
settings. More recent combinatorial results address limited samples
and multiple units~\cite{DKLRS24,CDHW25}. These papers concern one-sided
allocation: the present model adds recurring procurement from cost-bearing sellers and
an inventory process generated by the posted-price policy.

\paragraph*{Stationary prophets, uncertain supply, and dynamic matching.}
Uncertain-supply prophet inequalities and stochastic matching already combine
online demand with availability over time~\cite{ABGMW20,AS22,CILLN20}.
Kessel et al.~\cite{KSSW22} isolate the stationary-prophet model, in which
copies arrive exogenously, remain usable for random lifetimes, and must be
allocated to recurring buyers. Patel and Wajc~\cite{PW24} extend this line to
combinatorial feasibility constraints and submodular objectives. They also
identify XOS valuations and posted-price mechanisms as natural further
directions. Our contribution-balanced result addresses a two-sided variant of
those directions with nonperishing inventory. Seller opportunities arrive
exogenously, but inventory is not
replenished automatically: every stored unit must be purchased from a seller
at the common posted price, and the seller's cost enters social welfare. Buyers
have XOS preferences over bundles, and their demand can correlate the item
inventories. The contribution-balanced proof therefore avoids both
independence assumptions and any need to prescribe the stationary
availabilities. The separate rate-balance result controls stock and
free-capacity probabilities with jointly chosen prices and randomized
tie-breaking.

\paragraph*{Trading prophets.}
The trading-prophet model replaces a single stopping decision with repeated
buying and selling over a finite observed sequence~\cite{CCDHOS23}. Subsequent
work considers several assets and initial capital~\cite{RCV25,ABLV26}.
Aggarwal et al.~\cite{AGWZ26} study an
asymmetric model with separate buying and selling opportunities and prove an
impossibility without initial inventory, even when the two sides have
identical distributions. Our model is also asymmetric and begins empty, but
it has recurring stationary streams, held inventory does not perish, and
there is no terminal liquidation requirement. The objective also differs:
we maximize long-run social gain, whereas trading prophets maximize the
trader's finite-horizon profit. These distinctions allow balanced long-run
flows to overcome the initial transient.

\paragraph*{Strategic two-sided combinatorial markets.}
Colini-Baldeschi et al.~\cite{CGKLRT20} give incentive-compatible and
budget-balanced constant-factor mechanisms for two-sided combinatorial
auctions with XOS buyers. Braun and Kesselheim~\cite{BK23} use
prophet-inequality ideas to obtain simple strategic mechanisms, including a
\(1/2\) welfare approximation for combinatorial double auctions with
unit-supply sellers and XOS buyers. D\"utting et al.~\cite{DFLLR26} show that
one sample from each seller distribution suffices for broad classes of
approximately efficient two-sided mechanisms and give black-box reductions
from one-sided mechanisms.

These papers study one-shot reallocation markets in which sellers initially
own the items and incentive compatibility, individual rationality, and budget
balance are central. Our agents are price-taking; seller arrivals instead
create recurring procurement opportunities, every unit sold must have been
purchased earlier, and performance is measured by long-run gains from trade
against a clairvoyant unlimited-inventory benchmark.

\paragraph*{Gains from trade and common fixed prices.}
Brustle et al.~\cite{BCWZ17} obtain a \(1/2\) approximation to the best
incentive-compatible gains-from-trade benchmark in double auctions with
unit-demand buyers and general downward-closed feasibility constraints.
Colini-Baldeschi et al.~\cite{CGKLT17} study the gains from trade achievable
by common fixed prices in bilateral trade and double auctions. These results
further motivate common prices, but concern one-shot strategic exchange among
initially endowed agents. In our recurrent model, the price also controls
procurement and thereby the future inventory state.

\paragraph*{Online intermediation.}
Online intermediation studies incentive compatibility, individual rationality,
and budget balance~\cite{GKL17}. Our agents take prices as given; the welfare
guarantees omit strategic reporting, intermediary profit, and operating costs.
The transfers require only the bounded working capital established in
Appendix~\ref{app:finite-horizon-multi}. Strategic recurring two-sided
mechanisms and optimizing gain under a given capital budget require further analysis.

\section{Single-item implementation details}
\label{app:single-implementation}

\begin{remark}[Indifference and implementation]
\label{rem:rate-balance-implementation}
The price \(\tau\) is an actual payment, but the rate-balanced
guarantee uses the prescribed randomization when a cost or value equals it.
If indifferent sellers and buyers instead follow a different rule, the equality
\(u=v\) need not hold. In contrast, the contribution-balanced multi-item guarantee in
Theorem~\ref{thm:constructive-multi} is valid under any nonanticipatory
tie-breaking rule, regardless of how it resolves seller indifference and buyer
demand ties.
\end{remark}

\section{Single-item rule comparisons}
\label{app:single-comparison}
\label{app:rule-comparison}

For one item, the matching and bundle-flow relaxations have the same value.
Given a feasible matching in~\eqref{eq:single-lp}, delete pairs with
\(b_j<a_i\), which have zero positive-part gain. Its buyer and seller
marginal rates then give a feasible one-item flow in~\eqref{eq:multi-lp}
with the same objective. Conversely, couple the item-bearing buyer and
seller rates of any feasible one-item flow into a matching with those
marginals. Since \((b_j-a_i)^+\ge b_j-a_i\), its matching objective is at
least the flow objective. Thus the LP values agree. At unit capacity,
Theorems~\ref{thm:single-main} and~\ref{thm:constructive-multi} give exact
gain \(\LP_1/2\) for rate balance and at least that value for contribution
balance from an optimal flow.

All sellers in the examples below have cost zero. The fixed rules induce
irreducible finite birth-death chains, which converge from empty inventory;
PASTA therefore justifies the long-run stock calculations. For positive
potential purchase and sale rates \(u,v\), neighboring inventory levels
satisfy \(\pi_{n+1}/\pi_n=u/v\).

\paragraph*{Contribution balance gives larger gain.}
Let \(C=1\), seller rate \(2\), and value-\(1\) buyer rate \(1\).
The LP value is \(1\), attained by purchasing and allocating at rate \(1\).
Use rate-balanced price \(\tau=0\), independently accepting each
indifferent seller with probability \(1/2\). All buyers strictly accept,
so \(u=v=1\), and the gain is \(1/2\).

The optimal flow has \(x=1\), \(W=1\), and \(A=0\), giving contribution
price \(p=1/2\). All sellers and buyers strictly accept, so \(u=2\) and
\(v=1\). The stationary probabilities are \((\pi_0,\pi_1)=(1,2)/3\).
Buyers find stock with probability \(2/3\), giving gain \(2/3\).

\paragraph*{Rate balance gives larger gain.}
Let \(C=2\), seller rate \(1\), and buyer values \(8\) and \(15\), each
with arrival rate \(1\). The LP value is \(15\): at most one unit can be
allocated per unit time, and allocating only to value-\(15\) buyers attains
this upper bound. At rate-balanced price \(\tau=10\), sellers and the
value-\(15\) buyers strictly accept, while value-\(8\) buyers reject.
Thus \(u=v=1\), inventory is uniform on \(\{0,1,2\}\), and gain is
\(15\cdot(2/3)=10\).

The optimal flow has \(x=1\), \(W=15\), and \(A=0\), giving contribution
price \(p=15/2\). Both buyer types strictly accept, so \(u=1\) and
\(v=2\). The stationary probabilities are
\((\pi_0,\pi_1,\pi_2)=(4,2,1)/7\). Hence buyers find stock with probability
\(3/7\), and the gain is
\[
(8+15)\frac37=\frac{69}{7}<10.
\]

\end{document}